\documentclass[conference]{IEEEtran}
\IEEEoverridecommandlockouts

\usepackage{cite}
\usepackage{amsmath,amssymb,amsfonts}
\usepackage{algorithmic}
\usepackage{graphicx}
\usepackage{textcomp}
\usepackage{xcolor}

\usepackage{xspace}
\usepackage[ruled,vlined]{algorithm2e}
\usepackage{booktabs}
\usepackage{pifont}
\usepackage{fancyhdr}
\usepackage{color}
\usepackage{soul, framed}
\usepackage{makecell}
\usepackage{colortbl}
\usepackage{xcolor,colortbl}
\usepackage{multirow}
\usepackage{bm,dsfont}
\usepackage{subfigure}
\usepackage{balance}
\usepackage{stmaryrd}
\usepackage{autobreak}
\usepackage{enumitem}
\usepackage{url}
\usepackage{amsthm}

\def\BibTeX{{\rm B\kern-.05em{\sc i\kern-.025em b}\kern-.08em
    T\kern-.1667em\lower.7ex\hbox{E}\kern-.125emX}}

\newtheorem{example}{\textbf{Example}}

\newtheorem{lemma}{\textbf{Lemma}}

\newtheorem{definition}{Definition}

\newlength{\figsize} 

\renewcommand{\footnotesize}{\small}

\newcommand{\abcore}{$(\alpha,\beta)$-core\xspace}

\newcommand{\abcores}{$(\alpha,\beta)$-cores\xspace}

\newcommand{\online}{\texttt{Online}\xspace} 
\newcommand{\baseline}{\texttt{GCC}\xspace}
\newcommand{\adv}{\texttt{GCC+}\xspace}
\newcommand{\qf}{\texttt{GFQ}\xspace}
\newcommand{\boge}{\texttt{BiCore}\xspace} 
\newcommand{\fang}{\texttt{BIR}\xspace}

\newcommand{\parindex}{\texttt{PAR}\xspace}

\begin{document}

\title{Efficient \abcore Computation and On-the-fly Query at Billion Scale with GPUs}
\makeatletter
\newcommand{\linebreakand}{%
  \end{@IEEEauthorhalign}
  \hfill\mbox{}\par
  \mbox{}\hfill\begin{@IEEEauthorhalign}
}
\makeatother
\author{{Qingshuai Feng$^{1}$, Shunyang Li$^{2}$, Kai Wang$^{3}$, Xuemin Lin$^{3}$, Kongzhang Hao$^{4}$, Long Yuan$^{5}$}

\vspace{1.6mm}\\
\fontsize{10}{10}
\selectfont\itshape
$^1$Great Bay University; $^2$Alibaba Group; $^3$Shanghai Jiao Tong University; \\$^4$University of New South Wales; $^5$Wuhan University of Technology\\
\fontsize{9}{9} \selectfont\ttfamily\upshape
qsfeng@gbu.edu.cn;
lishunyang.lsy@alibaba-inc.com;
cskaelwang@gmail.com;\\
xuemin.lin@sjtu.edu.cn;
haokongzhang@gmail.com;
longyuan@whut.edu.cn\\}


\maketitle

\begin{abstract}
In bipartite graphs, \abcore is a widely used model for cohesive subgraph mining. Specifically, an \abcore is a maximal subgraph in which each vertex in the upper layer has degree at least $\alpha$, and each vertex in the lower layer has degree at least $\beta$. 
The state-of-the-art CPU-based solutions incur extensive costs to construct an index structure for all $\alpha$ and $\beta$ combinations, leading to scalability challenges on large bipartite graphs. 
Moreover, on-the-fly queries, which aim to determine whether an edge update belongs to a target \abcore, are essential for real-time applications such as fraud monitoring and recommendation systems. However, existing index-based methods struggle to support such queries at scale due to their high maintenance overhead.
In this paper, we investigate how to leverage GPU architectures to enable efficient \abcore computation and support on-the-fly queries. While GPUs are widely used to accelerate graph processing, their limited memory capacity makes it impractical to store large index structures. 
To address this issue, we propose \texttt{GCC}, an index-free GPU-based peeling algorithm that accelerates \abcore computation via warp-centric processing.
To further improve efficiency, we develop \texttt{GCC+}, which leverages the nested property of \abcore with a core-based early pruning strategy. 
For handling on-the-fly queries, we propose \texttt{GFQ}, a connectivity-aware algorithm that significantly narrows the computation scope by leveraging connected component information, thereby avoiding full-graph peeling.
Extensive experiments on 11 datasets demonstrate that our proposed techniques outperform existing CPU-based solutions in terms of both space and time efficiency.

\end{abstract}

\begin{IEEEkeywords}
GPU, Cohesive subgraph, Bipartite graphs.
\end{IEEEkeywords}

\section{Introduction}
\label{sct:introduction}

Bipartite graphs are widely used to model interactions between two distinct types of entities, such as author--paper networks~\cite{konect:DBLP}, user--item networks~\cite{beutel2013copycatch}, and gene co-expression networks~\cite{kaytoue2011mining}.
Cohesive subgraph discovery is a fundamental problem in graph analytics, aiming to identify densely connected subgraphs with strong internal connectivity.
It has been widely used in applications such as community detection~\cite{BohuaYang2019uncertain, yuan2016diversified, wang2023}, social recommendation~\cite{DBLP:conf/icde/LuoLZGL22, liu2019}, and anomaly detection~\cite{yuan2016diversified, DBLP:conf/icdm/ShinEF16, he2023scaling}.
Among various cohesive models tailored for bipartite graphs, the \abcore has been widely studied~\cite{liu2019,luo2023efficient,wang2022discovering,ding2017efficient,ahmed2007visualisation,cerinvsek2015generalized}.
It is defined as a maximal subgraph in which each vertex in the upper layer has degree at least $\alpha$, and each vertex in the lower layer has degree at least $\beta$.
For instance, as shown in Figure~\ref{fig:example}, the $(2,3)$-core is the subgraph enclosed by gray dots, which contains vertices $u_0$--$u_3$ and $v_0$--$v_2$.

\begin{figure}[htb]
\centering
\includegraphics[trim=0 0 0 0,width=0.35\textwidth]{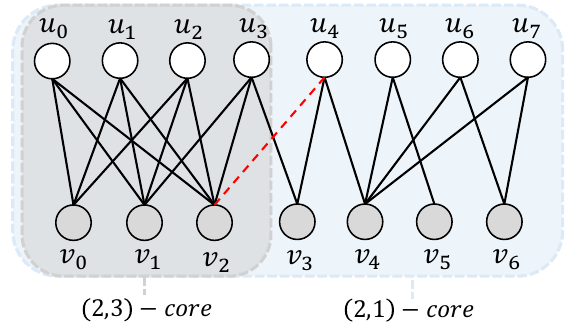}
\vspace*{-2mm}
\caption{ A bipartite graph  $G$.}
\label{fig:example}
\vspace*{-6mm}
\end{figure}

In many real-world scenarios, efficient computation of \abcores is essential to support large-scale analysis and real-time applications~\cite{wang2022discovering, liu2019, li2024querying}.
The peeling-based method~\cite{ding2017efficient} computes an \abcore by iteratively removing vertices that violate the degree constraints, but each query is processed from scratch and can be costly on large bipartite graphs.
Existing solutions~\cite{liu2019, luo2023efficient} therefore adopt index-based approaches to enhance the efficiency of \abcore computation and query processing.
However, these approaches incur substantial computational and memory overhead for constructing index structures across all $(\alpha,\beta)$ combinations, posing significant scalability challenges on large-scale bipartite graphs.
For instance, on our evaluated \texttt{PM} dataset with $4 \times 10^8$ edges, existing index-based solutions~\cite{liu2020efficient,luo2023efficient} consume over 250 GB of memory during index construction with 32 threads, making them impractical for billion-scale graph analytics.

Moreover, bipartite graphs in the wild evolve edge by edge, as observed in online social networks, web graphs, and user-product networks~\cite{li2024querying, liu2020efficient, luo2023efficient}.
Consider a user-product platform: when an order request creates a new edge $e=(u,v)$ between user $u$ and product $v$, the system needs to decide immediately whether to approve or block the payment.
A proven way to detect coordinated fraud is to check whether the user, and hence the new edge, is involved in a suspiciously dense subgraph, such as an \abcore, before the transaction is finalized~\cite{DBLP:conf/kdd/YuWW00ZLZ23}.
We formalize this need with an on-the-fly \abcore query: given a newly arrived edge update $e=(u,v)$ in a bipartite graph $G$, determine immediately after the edge insertion or deletion whether $e$ belongs to an \abcore of the updated graph.
Although some works~\cite{liu2020efficient,luo2023efficient} have introduced index-based algorithms to support dynamic edge updates, their maintenance procedures can still degenerate to the same complexity as rebuilding the index from scratch in the worst case.
Such high computational overhead makes it difficult to ensure real-time responsiveness in evolving graphs with frequent updates, rendering index-based solutions impractical for latency-sensitive applications.

Modern Graphics Processing Units (GPUs), explicitly architected for massive parallelism, have been widely adopted to accelerate graph applications such as butterfly counting~\cite{xu2022efficient}, bitruss decomposition~\cite{li2025efficient} and core decomposition~\cite{ahmad2023accelerating}.
Unlike traditional CPUs, GPUs comprise thousands of cores that execute instructions concurrently, providing substantial throughput for highly parallel graph workloads~\cite{li2025efficient}.
However, adapting index-based \abcore approaches to GPUs is infeasible due to limited GPU memory resources~\cite{xu2022efficient, ahmad2023accelerating, wang2024parallelization}.
In this paper, we aim to develop efficient GPU-based algorithms for \abcore computation and query processing, while facing the following challenges.
 
\noindent
\underline{\textit{1) Memory overhead.}}
Fast \abcore query processing often relies on index structures, which can consume substantial memory and exceed GPU capacity on large bipartite graphs.

\noindent
\underline{\textit{2) Workload imbalance.}}
Without large indexes, \abcore computation relies on iterative peeling with frequent neighbor updates. Skewed degrees in real-world bipartite graphs lead to irregular memory access and imbalanced GPU workloads.

\noindent
\underline{\textit{3) Large affected scope for on-the-fly queries.}}
For edge updates, maintaining indexes can be costly, while online peeling may involve a large affected scope for each update, making low-latency on-the-fly query processing difficult on GPUs.

\noindent
{\bf Our solutions.}
To address the memory and workload challenges in \abcore computation, we first propose a \underline{G}PU-friendly two-phase $(\alpha,\beta)$-\underline{C}ore \underline{C}omputation algorithm (\baseline), which follows the peeling paradigm~\cite{ding2017efficient} without constructing large global indexes.
In the scan phase, \baseline uses a flat ID-based filtering strategy to check all vertices in parallel, avoiding separate processing of the two bipartite layers.
In the peel phase, it adopts a warp-centric execution strategy, where each warp collaboratively processes a vertex and its neighbors, reducing thread divergence and improving memory access efficiency.
To further reduce redundant computation, we develop \adv, which performs lightweight preprocessing to compute vertex core numbers and uses them to prune vertices whose membership can be safely determined before peeling.

To address the large affected scope in on-the-fly queries, we propose a \underline{G}PU-based on-the-\underline{F}ly \abcore \underline{Q}uery algorithm (\qf).
Instead of maintaining expensive global indexes or recomputing over the entire graph, \qf uses connected component information to localize the computation after an edge update.
It then performs GPU-based peeling only within the affected component, thereby reducing both update processing cost and query latency.

\noindent
{\bf Contributions.} Our principal contributions are as follows.

\noindent
{\em $\bullet$}
We are the first to study \abcore computation and on-the-fly \abcore query processing on GPUs.

\noindent
{\em $\bullet$}
We propose \baseline, a GPU-based peeling algorithm for \abcore computation.
It combines flat ID-based parallel scanning with warp-centric peeling to improve GPU utilization under skewed bipartite graph structures.

\noindent
{\em $\bullet$}
We develop \adv, an optimized version of \baseline that uses lightweight core-number-based pruning to reduce redundant peeling operations.

\noindent
{\em $\bullet$}
We propose \qf, a GPU-based on-the-fly query algorithm that localizes computation to affected connected components after edge updates, avoiding full-graph recomputation.

\noindent
{\em $\bullet$}
We conduct extensive experiments on 11 bipartite graphs to evaluate \baseline, \adv, and \qf.
Compared with online methods that perform no preprocessing, \adv achieves at least one order of magnitude speedup in query time.
Compared with leading index-maintenance approaches, \qf achieves up to 175$\times$ speedup for on-the-fly queries.

\noindent
{\bf Roadmap.}
The rest of this paper is organized as follows.
Section~\ref{sct:related} reviews related work.
Section~\ref{sct:preliminaries} introduces preliminaries.
Sections~\ref{sct:baseline} and~\ref{sct:adv1} present the GPU-based \abcore computation and on-the-fly query algorithms, respectively.
Section~\ref{sct:experiment} reports the experimental results, and Section~\ref{sct:conclusion} concludes the paper.
\section{Related Work}
\label{sct:related}

To the best of our knowledge, this paper is the first to study \abcore computation and on-the-fly query processing by leveraging GPU architectures.
Therefore, we review two closely related topics: CPU-based cohesive subgraph computation on bipartite graphs and GPU-accelerated cohesive subgraph computation.

\noindent
{\em \bf{CPU-based cohesive subgraph computation on bipartite graphs.}}
A variety of cohesive subgraph models have been proposed for bipartite graphs, including the \abcore model~\cite{ahmed2007visualisation, cerinvsek2015generalized, ding2017efficient, liu2020efficient, wang2020efficient, he2022efficient, wang2022discovering, zhao2022efficient}, the $k$-bitruss model~\cite{zou2016bitruss, sariyuce2018peeling, wang2020efficient}, and biclique-based models~\cite{mukherjee2016enumerating, das2018incremental, ma2021efficient, zhang2014finding, chen2021efficient}.
For the \abcore model, the peeling-based method~\cite{ding2017efficient} iteratively removes vertices that violate the degree constraints.
The index-based method~\cite{liu2020efficient} further exploits the nested property of \abcores to support fast query answering.
Subsequent studies extend the \abcore model to different settings, such as weighted bipartite graphs~\cite{wang2020efficient}, hierarchical structure discovery~\cite{wang2022discovering}, and uncertain bipartite graphs~\cite{zhao2022efficient}.
Beyond \abcores, the $k$-bitruss model captures edge cohesion based on butterflies, where each edge is contained in at least $k$ butterflies.
Existing CPU-based algorithms study online peeling, index-based computation, and distributed processing for bitruss decomposition~\cite{zou2016bitruss, sariyuce2018peeling, wang2020efficient}.
Biclique-based models have also been extensively studied, including maximal biclique enumeration~\cite{mukherjee2016enumerating, das2018incremental, zhang2014finding} and maximum balanced biclique search~\cite{chen2021efficient}.
These studies provide important cohesive models and CPU-based computation techniques for bipartite graphs.
In contrast, our work focuses on GPU-based, index-free \abcore computation and on-the-fly query processing for large-scale bipartite graphs.

\noindent
{\em \bf{GPU-accelerated cohesive subgraph computation.}}
GPUs have been widely used to accelerate cohesive subgraph computation due to their massive parallelism and high memory bandwidth.
On unipartite graphs, existing studies accelerate $k$-core decomposition using peeling-based methods, vector primitives, and warp-centric strategies~\cite{alok_2018_kcore,amir_2020_kcore,ahmad2023accelerating}.
However, the $k$-core decomposition approaches cannot be directly applied to \abcore computation, because \abcores involve two vertex layers with asymmetric degree constraints and highly skewed degree distributions.
For $k$-truss decomposition, GPU-based methods have been developed to accelerate triangle-based support computation and peeling~\cite{date2017collaborative,mailthody2018collaborative,almasri2019update,che2020accelerating}.
For clique enumeration, GPU acceleration has been explored through parallel search and memory-access optimizations~\cite{wei2021accelerating}.

On bipartite graphs, GPU-based studies mainly focus on butterfly-related primitives and biclique enumeration.
For example, GPU-based butterfly counting accelerates the enumeration of $(2,2)$-bicliques by improving load balancing and memory access efficiency~\cite{xu2022efficient}.
Recent work on GPU-based bitruss decomposition leverages GPU parallelism to accelerate butterfly-support computation and peeling in bipartite graphs~\cite{li2025efficient}.
GPU-based maximal biclique enumeration further addresses the large memory requirement, thread divergence, and workload imbalance of enumerating maximal bicliques on GPUs~\cite{pan2023efficient}.
Although these GPU-based methods demonstrate the effectiveness of GPU acceleration for cohesive subgraph computation, none of them studies GPU-based \abcore computation or on-the-fly \abcore query processing.
These tasks require handling asymmetric degree constraints and avoiding large index structures under limited GPU memory.

\section{Preliminaries}
\label{sct:preliminaries}

Our problem is defined over an undirected and unweighted bipartite graph $G=(V(G), E(G))$, where the vertex set is defined as $V(G) = U(G) \cup L(G)$, with $U(G)$ and $L(G)$ denoting the disjoint sets of upper- and lower-layer vertices, respectively.
The edge set is denoted as $E(G) \subseteq U(G) \times L(G)$.
For a vertex $u \in V(G)$, its neighbor set is defined as $N(u) = \{v \in V(G) \mid (u, v) \in E(G)\}$, and its degree is $deg(u, G) = |N(u)|$.
We use $n$ and $m$ to denote the numbers of vertices and edges in $G$, respectively.
To facilitate efficient graph processing, we store the graph in the Compressed Sparse Row (CSR) format~\cite{greathouse2014efficient}, a compact and memory-efficient representation that is suited for modern memory hierarchies.

\begin{definition} \textbf{\abcore.}
\label{def:abcore}
Given a bipartite graph $G$ and two parameters $\alpha$ and $\beta$, a subgraph $\mathcal{G} \subseteq G$ is the \abcore if (1) $deg(u, \mathcal{G}) \geq \alpha$ for every vertex $u \in U(\mathcal{G})$, and $deg(v, \mathcal{G}) \geq \beta$ for every vertex $v \in L(\mathcal{G})$; (2) $\mathcal{G}$ is maximal, i.e., no supergraph $\mathcal{G^{'}}$ $\supset$ $\mathcal{G}$ qualifies as an \abcore.
\end{definition}

In this paper, we study the following two problems regarding the \abcore model.

\noindent
\textbf{Problem Statement (\abcore Computation). }
Given a bipartite graph $G$, degree constraints $\alpha$ and $\beta$, we aim to retrieve the vertex set of the \abcore in $G$.

\noindent
\textbf{Problem Statement (On-the-fly \abcore query). }
Given a bipartite graph $G$, degree constraints $\alpha$ and $\beta$, and an edge update $e = (u, v)$ (insertion or deletion), we aim to determine whether $e$ belongs to the \abcore of the updated graph. 

\begin{example} 
Figure~\ref{fig:example} shows a bipartite graph $G$, where $n=15$ and $m=20$. The subgraph enclosed by the blue dots forms the $(2,1)$-core, which contains all vertices in the bipartite graph. The $(2,3)$-core is the subgraph enclosed by gray dots, which contains $\{u_0,u_1,u_2,u_3, v_0, v_1, v_2\}$. Consider the insertion of a new edge $e=(u_4, v_2)$, shown as the red dashed line in Figure~\ref{fig:example}. An \emph{on-the-fly} query can be performed to determine whether this edge belongs to the updated $(1,5)$-core. After the insertion, vertex $v_2$ reaches degree 5, satisfying the constraint $\beta=5$, and $u_4$ trivially satisfies $\alpha=1$. Therefore, the edge $e$ belongs to the $(1,5)$-core. 
\end{example}

\subsection{CPU-based \abcore Computation}

Since no prior work has studied \abcore computation on GPUs, we first review CPU-based solutions, including the peeling-based method~\cite{ding2017efficient} and index-based methods~\cite{liu2019,luo2023efficient}, which form the main baselines in our experiments.

The peeling-based method~\cite{ding2017efficient}, denoted by \online, computes an \abcore directly from the input graph without preprocessing.
Given a bipartite graph $G$ and degree constraints $\alpha$ and $\beta$, \online iteratively removes all upper-layer vertices whose degrees are smaller than $\alpha$ and all lower-layer vertices whose degrees are smaller than $\beta$, together with their incident edges.
This process continues until every remaining vertex satisfies the corresponding degree constraint, and the remaining vertices form the \abcore.
Although \online requires no preprocessing and no extra index storage, each query must perform the peeling process from scratch, which can be expensive on large bipartite graphs.

To further improve the efficiency of \abcore computation, \boge~\cite{liu2019} introduces nested computing and builds a three-level index that organizes vertices according to their valid $(\alpha,\beta)$ ranges, thereby supporting fast arbitrary \abcore queries.
\cite{liu2020efficient} further studies update algorithms for maintaining this index under dynamic graph changes.
Another index-based method, \fang~\cite{luo2023efficient}, introduces \textit{bi-core numbers} to summarize the range of \abcores that each vertex belongs to and provides corresponding maintenance algorithms for edge insertions and deletions.
Although these index-based methods can support efficient \abcore computation and querying, they incur significant memory overhead and computational costs.
Specifically, they require at least $O(m)$ space and $O(\delta \cdot m)$ time, where $\delta$ is the maximum value such that the $(\delta,\delta)$-core is non-empty.
In dynamic settings, index maintenance algorithms can be as expensive as rebuilding the entire index in the worst case, further limiting their scalability~\cite{liu2020efficient,luo2023efficient}.
In addition, despite the improved update efficiency of \fang, it still struggles to provide low-latency support for arbitrary queries under frequent updates, since query processing may still require scanning all vertices.

\section{GPU-based \abcore Computation with Nested-Aware Pruning}
\label{sct:baseline}

While index-based approaches such as \boge~\cite{liu2019} and \fang~\cite{luo2023efficient} can efficiently support arbitrary \abcore queries, they incur substantial memory overhead, limiting their direct applicability on GPUs.
For instance, the index construction processes of both \boge and \fang consume over 250 GB of memory when using 32 threads, which far exceeds the 80 GB memory capacity of modern A100 GPUs.
In contrast, the online peeling-based approach~\cite{ding2017efficient} naturally avoids index construction.
However, during iterative degree updates, multi-threaded CPU implementations of \online require frequent synchronization to ensure consistency, leading to substantial coordination overhead.
Moreover, \online alternates layer-wise scans over the two partitions to identify removable vertices, which further limits parallelism.

To support efficient \abcore computation on GPUs, we propose a \underline{G}PU-friendly two-phase $(\alpha,\beta)$-\underline{C}ore \underline{C}omputation algorithm (\baseline), which follows the peeling paradigm.
In each iteration, \baseline processes vertices in parallel and updates the degrees of their neighbors when vertices are removed.
To handle workload imbalance caused by skewed vertex degrees, \baseline adopts a warp-centric execution strategy, where each GPU warp collaboratively processes a vertex and its neighborhood.
This design improves parallel throughput without relying on costly global index structures.
To further improve overall efficiency, we develop an optimized algorithm, \adv, which integrates lightweight preprocessing based on vertex core numbers for early pruning.

\subsection{The \baseline algorithm}

We now introduce the \baseline algorithm in detail.
Unlike the CPU-based peeling approach~\cite{ding2017efficient}, which alternates between scanning upper and lower layers to identify vertices that violate the degree constraints, \baseline scans the entire graph in a unified manner to enable efficient parallel execution.
Specifically, each vertex $u \in V(G)$ is assigned a unique continuous ID, enabling GPU threads to determine whether a vertex belongs to the upper or lower layer via simple range-based checks.
CPU-based peeling typically uses conditional checks to determine layers, which can cause branch mispredictions and hurt cache performance~\cite{li2024querying, liu2019, li2025efficient}.
In contrast, our GPU method assigns IDs to reduce these branches and improve efficiency.
The input graph $G$ is stored in the Compressed Sparse Row (CSR) format~\cite{greathouse2014efficient}, which supports efficient neighbor access during peeling.
Algorithm~\ref{algorithm:baseline} presents the detailed procedure of the scan and peel phases.

In the scan phase (Lines 2--6), all vertices are processed in parallel to check whether they violate their corresponding $(\alpha,\beta)$ constraints.
Vertices that violate the constraints are inserted into the global candidate set $\mathcal{C}$ for subsequent peeling.
To ensure concurrency safety,  we employ atomic operations when updating $\mathcal{C}$, thereby reducing race conditions while maintaining high throughput.

In the peeling phase (Lines 7--15), \baseline processes the candidate set using a warp-centric strategy.
Each warp is assigned a vertex $u \in \mathcal{C}$ and collaboratively traverses its neighbors $N(u)$.
This design aligns with the CSR layout, improving memory coalescing and reducing thread divergence.
For each neighbor $v \in N(u)$, the warp performs an atomic decrement on $\deg(v,\mathcal{G})$ to reflect the removal of $u$ (Line 11).
If the updated degree of $v$ falls below its corresponding threshold, i.e., $\alpha$ for upper-layer vertices or $\beta$ for lower-layer vertices, $v$ is identified as a new candidate and inserted into a warp-local queue (Lines 12-13).
To reduce contention on the global candidate set, newly invalidated vertices are first buffered in warp-local queues and then flushed to $\mathcal{C}$ (Line 14).
Block-level synchronization is enforced between iterations to ensure that all degree updates are visible before the next round.

The algorithm terminates when $\mathcal{C}$ becomes empty and returns the remaining vertex set as the final \abcore.
Since \baseline avoids global index construction and processes each edge a constant number of times, its space and time complexity are bounded by $O(m)$.

\begin{algorithm}[t]
  \small
  \DontPrintSemicolon
  \caption{\baseline}
  \label{algorithm:baseline}
  \LinesNumbered
  \KwIn{$G=(U,L,E)$, degree thresholds $\alpha,\beta$}
  \KwOut{Vertex set of the $(\alpha,\beta)$-core of $G$}

  $\mathcal{G} \gets G$; \quad $\mathcal{C} \gets \emptyset$ \\

  \ForEach{$v \in V(\mathcal{G})$ in parallel} {
    \If{$v \in U(\mathcal{G}) \land \deg(v,\mathcal{G})<\alpha$}{
      enqueue $v$ into $\mathcal{C}$
    }
    \ElseIf{$v \in L(\mathcal{G}) \land \deg(v,\mathcal{G})<\beta$}{
      enqueue $v$ into $\mathcal{C}$
    }
  }

  \While{$\mathcal{C} \neq \emptyset$} {
    \ForEach{$\text{warp} \in \text{launched warps}$ in parallel} {
      $u \gets \mathcal{C}(\text{warp}_{\text{id}})$  \\

      \ForEach{$v \in N(u)$ in parallel} {
        atomic\_decrement $\deg(v,\mathcal{G})$  \\
        \If{($v \in U(\mathcal{G}) \land \deg(v,\mathcal{G})<\alpha$) 
           \textbf{or} ($v \in L(\mathcal{G}) \land \deg(v,\mathcal{G})<\beta$)}{
          enqueue $v$ into warp-local queue
        }
      }

      flush warp-local queue to $\mathcal{C}$
    }

    block\_synchronize
  }

  \Return{$V(\mathcal{G})$}
\end{algorithm}

\begin{figure}[t]
\vspace{-3mm}
\centering
\includegraphics[trim=0 0 0 0,width=0.48\textwidth]{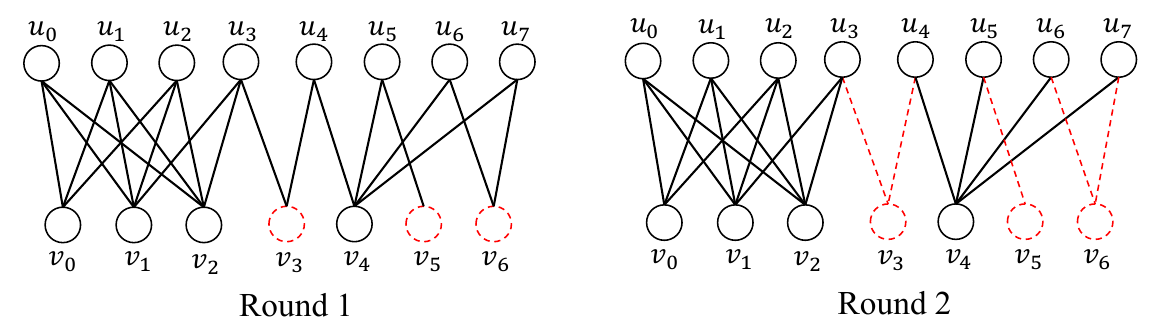}
\vspace*{-2mm}
\caption{Illustrating the $(1,3)$-core computation using \baseline.}
\label{fig:base_exp}
\vspace*{-1em}
\end{figure}

\begin{example}
Figure~\ref{fig:base_exp} illustrates the $(1,3)$-core computation process using \baseline.
In the first round, vertices $v_3$, $v_5$, and $v_6$ violate the degree threshold $\beta=3$ and are added to the candidate set for removal, as highlighted by red dashed circles.
After removing these vertices, \baseline updates the degrees of their neighbors.
In the second round, no further vertices violate the constraints, and the process terminates.
The final $(1,3)$-core consists of ${u_0,u_1,u_2,u_3,u_4,u_5,u_6,u_7,v_0,v_1,v_2,v_4}$.
\end{example}

\subsection{The \adv algorithm}
Although the GPU-based peeling algorithm \baseline substantially accelerates \abcore computation, it may still perform redundant work for queries where $\alpha$ and $\beta$ are close.
In this case, many vertices can be clearly determined before peeling, but \baseline still performs full-scale neighbor updates over them.
To reduce such unnecessary processing, we introduce an early pruning mechanism that reduces the number of active vertices involved in peeling.

Inspired by the classic $k$-core definition~\cite{khaouid2015k}, we use vertex core numbers to derive safe pruning conditions for \abcore computation.
Intuitively, the core number of a vertex indicates the largest symmetric $(\delta,\delta)$-core containing that vertex.
Thus, it can provide useful bounds for an asymmetric $(\alpha,\beta)$ query.
Vertices with insufficient core numbers can be safely excluded before peeling, while vertices with sufficiently large core numbers can be directly retained.
This lightweight preprocessing step can be efficiently parallelized on GPUs and helps reduce the search space, memory traffic, and redundant degree updates.
To utilize this strategy, we first define the notion of core numbers in bipartite graphs.

\begin{definition} \textbf{Core Number~\cite{batagelj2003m, khaouid2015k}.}
\label{def:corenumber}
Given a bipartite graph $G$, the \emph{core number} of a vertex $u \in V(G)$ is the largest integer $\delta$ such that $u$ belongs to the $(\delta,\delta)$-core of $G$. Formally,
\[
\text{core}(u) = \max \left\{ \delta \in \mathbb{N} \mid u \in \mathcal{G}, \text{ where } \mathcal{G} \text{ is a } (\delta,\delta)\text{-core of } G \right\}
\]
\end{definition}

According to the definition of the core number in bipartite graphs (i.e., Definition~\ref{def:corenumber}), we can derive a crucial pruning principle that enables early elimination or retention of vertices before the peeling process. 
This observation leads to the following lemma:

\begin{lemma}
\label{lemma:corefilter}
Given a bipartite graph $G$, let $\text{core}(u)$ denote the core number of vertex $u \in V(G)$.
When computing the $(\alpha,\beta)$-core, the following pruning conditions hold:
\begin{itemize}
\item[(1)] If $\text{core}(u) < \min(\alpha,\beta)$, then $u \notin \mathcal{G}$, where $\mathcal{G}$ is the $(\alpha,\beta)$-core of $G$.
\item[(2)] If $\text{core}(u) \geq \max(\alpha,\beta)$, then $u$ is guaranteed to be included in $\mathcal{G}$.
\end{itemize}
\end{lemma}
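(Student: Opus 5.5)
The plan is to reduce both parts of Lemma~\ref{lemma:corefilter} to one monotonicity (nesting) fact about \abcores: if $\alpha' \ge \alpha$ and $\beta' \ge \beta$, then the $(\alpha',\beta')$-core of $G$ is a subgraph of the $(\alpha,\beta)$-core of $G$. With this fact, both items follow by comparing $(\alpha,\beta)$ with the symmetric pairs $(\min(\alpha,\beta),\min(\alpha,\beta))$ and $(\max(\alpha,\beta),\max(\alpha,\beta))$.

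\textbf{Step 1 (uniqueness of the core).} Definition~\ref{def:abcore} speaks of ``the'' \abcore, so I first make sure it is unique and equals the union of all subgraphs satisfying the degree constraints. Call a subgraph $H \subseteq G$ \emph{$(\alpha,\beta)$-feasible} if every upper vertex of $H$ has degree at least $\alpha$ in $H$ and every lower vertex has degree at least $\beta$ in $H$. Let $H_1, H_2$ be feasible. In $H_1 \cup H_2$, each vertex keeps all edges it had in whichever $H_i$ contains it, so its degree does not decrease. Hence $H_1 \cup H_2$ is feasible. Since $G$ is finite, the union of all feasible subgraphs is feasible and contains every feasible subgraph. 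It is therefore the unique maximal one, namely the \abcore.

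\textbf{Step 2 (nesting).} Let $\alpha' \ge \alpha$ and $\beta' \ge \beta$. Every $(\alpha',\beta')$-feasible subgraph is also $(\alpha,\beta)$-feasible, because the degree bounds it meets are at least as strong. So the $(\alpha',\beta')$-core lies inside the union from Step~1, that is, inside the $(\alpha,\beta)$-core.

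\textbf{Step 3 (the two items).} Write $a=\min(\alpha,\beta)$ and $b=\max(\alpha,\beta)$.
\begin{itemize}
\item[(1)] Suppose $u$ lies in the $(\alpha,\beta)$-core. By Step~2 applied to $(a,a) \le (\alpha,\beta)$, vertex $u$ lies in the $(a,a)$-core. By Definition~\ref{def:corenumber} this gives $\text{core}(u) \ge a$, contradicting the hypothesis $\text{core}(u) < a$.
\item[(2)] Suppose $\delta = \text{core}(u) \ge b$. Then $u$ lies in the $(\delta,\delta)$-core. Step~2 gives $(\delta,\delta)\text{-core} \subseteq (b,b)\text{-core} \subseteq (\alpha,\beta)\text{-core}$, since $\delta \ge b \ge \alpha$ and $\delta \ge b \ge \beta$. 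Hence $u \in \mathcal{G}$.
\end{itemize}

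The only real content is Step~1, which turns the maximality in Definition~\ref{def:abcore} into ``union of all feasible subgraphs''. The point to check there is that taking unions never lowers any vertex's degree. I also need the degenerate case where $u$ belongs to no $(\delta,\delta)$-core with $\delta \ge 1$. This is handled by taking the $(0,0)$-core to be $G$, so that $\text{core}(u) \ge 0$ is always well defined.
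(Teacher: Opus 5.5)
Your proof is correct and takes essentially the same route as the paper: compare the $(\alpha,\beta)$-core with the symmetric $(\min(\alpha,\beta),\min(\alpha,\beta))$- and $(\max(\alpha,\beta),\max(\alpha,\beta))$-cores and invoke maximality. Your Steps~1--2 (union-closure of feasible subgraphs, hence nesting) make explicit two facts the paper uses without justification: that a feasible subgraph lies inside the core, which is what its appeal to maximality relies on, and that $\text{core}(u)\ge\delta_q$ implies membership in the $(\delta_q,\delta_q)$-core.
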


\begin{proof}
Let $\delta_l=\min(\alpha,\beta)$ and $\delta_q=\max(\alpha,\beta)$.

(1) If $\text{core}(u)<\delta_l$, then by Definition~\ref{def:corenumber}, vertex $u$ does not belong to any $(\delta_l,\delta_l)$-core.
Since any $(\alpha,\beta)$-core must satisfy degree thresholds at least $\delta_l$ on both layers, $u$ cannot belong to the $(\alpha,\beta)$-core.
Thus, Lemma~\ref{lemma:corefilter}(1) holds.

(2) If $\text{core}(u)\geq\delta_q$, then $u$ belongs to the $(\delta_q,\delta_q)$-core.
Since $\delta_q\geq\alpha$ and $\delta_q\geq\beta$, this $(\delta_q,\delta_q)$-core also satisfies the $(\alpha,\beta)$ degree constraints.
By the maximality of the $(\alpha,\beta)$-core, $u$ must be included in $\mathcal{G}$.
Thus, Lemma~\ref{lemma:corefilter}(2) holds.
\end{proof}

Based on Lemma~\ref{lemma:corefilter}, we can significantly reduce the workload of GPU-based online \abcore computation (i.e., \baseline) by preemptively pruning vertices that are guaranteed to be either excluded from or included in the result.
This enhancement leads to an optimized algorithm design with lower runtime overhead on large bipartite graphs. 
We next introduce an enhanced GPU-based algorithm, \adv, that incorporates the core-number-based pruning strategy into the \baseline framework.

\begin{algorithm}[htbp]
  \small
  \DontPrintSemicolon
  \caption{\adv}
  \label{algorithm:adv_baseline}
  \LinesNumbered
  \KwIn{$G$, $\alpha$, $\beta$, precomputed core number $\text{core}(u)$ for each vertex $u \in V(G)$}
  \KwOut{Vertices in the $(\alpha,\beta)$-core of $G$}

  $\delta_l \gets \min(\alpha,\beta)$; \quad $\delta_q \gets \max(\alpha,\beta)$ \\

  $\mathcal{G} \gets G$; \quad $\mathcal{C} \gets \emptyset$ \\

  \ForEach{$v \in V(G)$ in parallel} {
    \If{$\delta_l \le \text{core}(v) < \delta_q$}{
        \If{$v \in U(\mathcal{G}) \land \deg(v,\mathcal{G})<\alpha$}{
          enqueue $v$ into $\mathcal{C}$
        }
        \ElseIf{$v \in L(\mathcal{G}) \land \deg(v,\mathcal{G})<\beta$}{
          enqueue $v$ into $\mathcal{C}$
        }
    }
  }

  \While{$\mathcal{C} \neq \emptyset$} {
    \ForEach{$\text{warp} \in \text{launched warps}$ in parallel} {
      $u \gets \mathcal{C}(\text{warp}_{\text{id}})$  \\
      \ForEach{$v \in N(u)$ in parallel} {
        \If{$\delta_l \le \text{core}(v) < \delta_q$}{
          atomic\_decrement $\deg(v,\mathcal{G})$  \\
          \If{$(v\in U(\mathcal{G}) \text{ and } \deg(v,\mathcal{G})<\alpha)$ 
             \textbf{or} $(v\in L(\mathcal{G}) \text{ and } \deg(v,\mathcal{G})<\beta)$}{
            enqueue $v$ into warp-local queue
          }
        }
      }
      flush warp-local queue to $\mathcal{C}$
    }
    block\_synchronize
  }

  \Return{$V(\mathcal{G})$}
\end{algorithm}

Algorithm~\ref{algorithm:adv_baseline} outlines \adv, an optimized GPU-based $(\alpha,\beta)$-core computation algorithm that integrates core-number-based pruning to significantly reduce the search space. Unlike the baseline \baseline, which indiscriminately inspects all neighbors during peeling, \adv leverages precomputed core numbers (Definition~\ref{def:corenumber}) to filter vertices statically.

The algorithm begins by deriving two pruning thresholds: $\delta_l = \min(\alpha, \beta)$ and $\delta_q = \max(\alpha, \beta)$ (Line 1). According to Lemma~\ref{lemma:corefilter}, vertices with $core(u) < \delta_l$ are inherently excluded, while those with $core(u) \ge \delta_q$ satisfy the degree constraints trivially and require no dynamic verification. Consequently, the computation focuses solely on the "uncertain" vertices within the range $[\delta_l, \delta_q)$.

In the scan phase (Lines 3–8), the algorithm populates the initial candidate set $\mathcal{C}$. A thread is assigned to each vertex $v$ to check two conditions: (1) whether $core(v)$ falls within the pruning range $[\delta_l, \delta_q)$, and (2) whether it violates the degree constraints. Only vertices satisfying both conditions are enqueued into $\mathcal{C}$, effectively filtering out irrelevant updates before the peeling begins.

The peeling phase (Lines 9–18) employs a warp-centric strategy to mitigate thread divergence and memory contention. Each warp processes a vertex $u \in \mathcal{C}$ and its neighbors $N(u)$ in parallel. A critical optimization occurs at Line 13: before performing an atomic decrement on a neighbor $v$, the warp verifies if $core(v) \in [\delta_l, \delta_q)$. This ensures that updates are only propagated to vertices whose status is not yet determined, thereby avoiding redundant atomic operations on vertices that are guaranteed to remain in the core. If a neighbor $v$ violates the $(\alpha, \beta)$ constraints after the update, it is buffered in a warp-local queue and subsequently flushed to the global set $\mathcal{C}$. Global consistency is maintained via block-level synchronization (Line 18). The process terminates when $\mathcal{C}$ becomes empty, returning the exact $(\alpha,\beta)$-core. 

\begin{figure}[t]
\begin{centering}
\subfigure[\texttt{Core number of a bipartite graph $G$}]{
\begin{minipage}[b]{0.4\textwidth}
\includegraphics[trim=0 0 0 0,clip,width=1\textwidth]{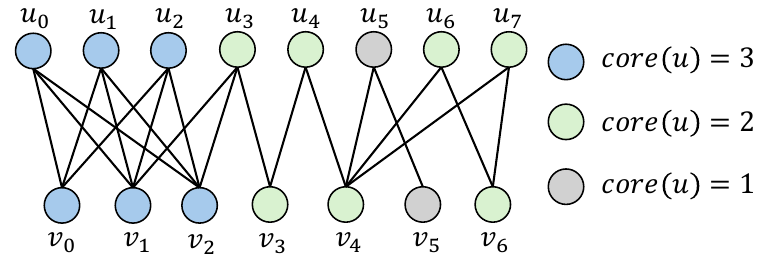} 
\label{fig:r1}
\vspace{-5mm}
\end{minipage}}
\subfigure[\texttt{Peeling process}]{
\begin{minipage}[b]{0.45\textwidth}
\includegraphics[trim=0 0 0 0,clip,width=1\textwidth]{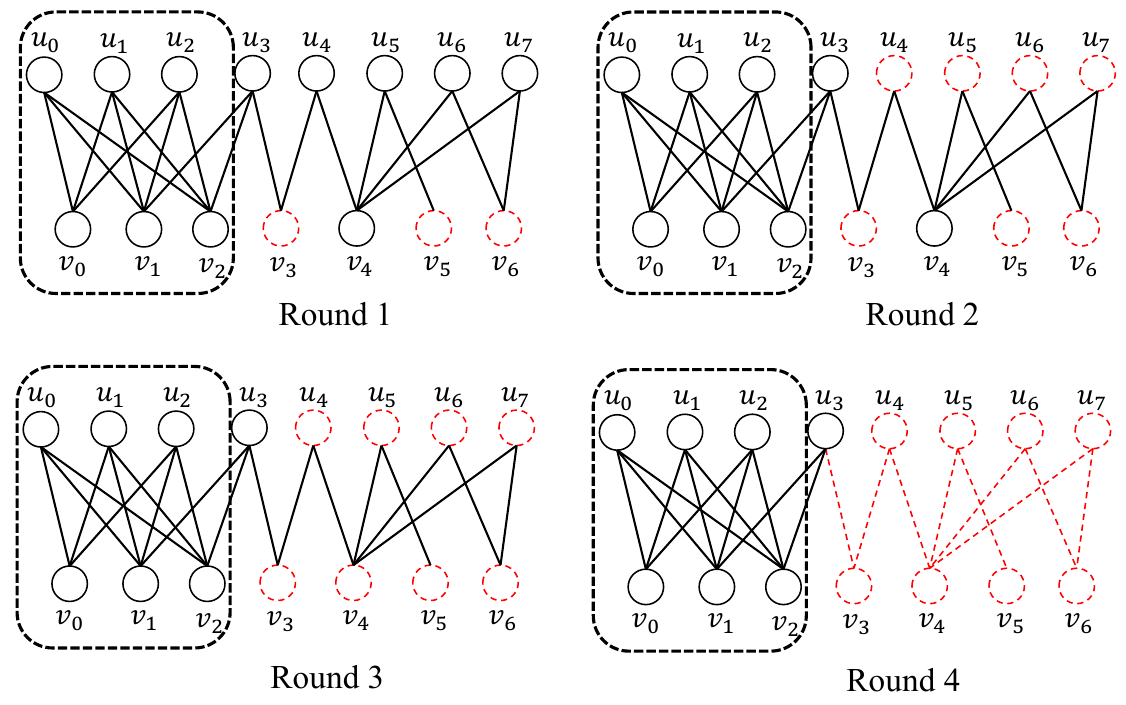} 
\label{fig:r2}
\vspace{-5mm}
\end{minipage}} 
\caption{Illustrating the $(2,3)$-core computation using \adv.}
\label{fig:exp2}
\end{centering}
\vspace*{-3mm}
\end{figure}

{\color{black}
\begin{example}

Figure~\ref{fig:exp2} demonstrates the $(2,3)$-core computation using the \adv algorithm. 
In Figure~\ref{fig:r1}, the preprocessing step computes the core number of each vertex, where blue, green, and gray nodes represent core numbers of 3, 2, and 1, respectively. 
Given the thresholds $\alpha=2$ and $\beta=3$, we determine pruning bounds $\delta_l=2$ and $\delta_q=3$.
Vertices with core number at least $\delta_q$ (i.e., those enclosed by the dashed boundary, such as $u_0$ to $u_3$ and $v_0$ to $v_3$) are guaranteed to satisfy the constraints and are excluded from the peeling. 
In the first round, vertices $v_3$, $v_5$, and $v_6$ are identified as violating the degree threshold $\beta=3$ (highlighted as red dashed circles) and are added to the candidate set for removal. 
Their removal causes degree updates to their neighbors.
In Round 2, vertices $u_4$, $u_5$, $u_6$, and $u_7$ fall below the threshold $\alpha=2$ and are peeled accordingly.
The process terminates when no further violations exist, and the final $(2,3)$-core includes vertices $\{u_0, u_1, u_2, u_3, v_0, v_1, v_2\}$.
\end{example}
}
\section{On-the-Fly Query Processing}
\label{sct:adv1}

In many real-world applications, such as user--item networks~\cite{beutel2013copycatch}, social networks~\cite{cerinvsek2015generalized}, and financial networks~\cite{luo2023efficient}, graphs are inherently dynamic, with edges and vertices frequently inserted or deleted over time.
A major challenge in such scenarios is to support real-time query processing on dynamic bipartite graphs.
For example, in a user--item bipartite graph on an e-commerce platform, when a user purchases a product, the system may need to immediately determine whether this interaction is involved in suspicious behavior, such as forming a dense structure with highly unbalanced interaction patterns between the two sides of the graph~\cite{li2024querying}.
In such cases, fast on-the-fly structural queries are essential for timely fraud detection.

To address this challenge, we propose \qf, a \underline{G}PU-based on-the-\underline{F}ly \abcore \underline{Q}uery algorithm that leverages graph connectivity to reduce the computation scope.
The key observation is that the impact of an edge update is restricted to the connected component region affected by that update.
By maintaining component information, \qf avoids applying peeling to the entire graph and focuses only on the subgraph that may influence the query result.

\noindent{\bf Preprocessing.}
Instead of maintaining expensive global indexes, \qf uses a lightweight GPU-parallel union-find procedure to assign a component identifier to each vertex.
These component identifiers allow \qf to quickly locate the affected component when an edge update arrives.

\noindent{\bf Query procedure.}
Given an edge update $e=(u,v)$, \qf first updates the component information.
If $e$ is an insertion between two different components, the two components are merged.
If $e$ is a deletion, \qf re-invokes the parallel connectivity procedure on the affected component to handle possible component splits.
Then, \qf extracts the induced subgraph $\mathcal{G}$ associated with the affected component and executes the peeling process, i.e., Algorithm~\ref{algorithm:baseline}, only on $\mathcal{G}$.
This avoids redundant computation on unrelated parts of the graph.
Finally, \qf returns whether the updated edge or its endpoints remain in the target \abcore of the affected subgraph, depending on the update type.

\noindent
{\bf Remark.}
The algorithm \qf is not limited to verifying the membership of an updated edge $e=(u,v)$.
It can also be adapted to retrieve the vertex set of the \abcore containing the updated endpoints by returning the remaining vertices after peeling the affected component.
Moreover, in the static setting, \qf can support localized community search by returning the \abcore that contains a query vertex or edge, without requiring a global index or full-graph traversal.

\noindent
{\bf Complexity analysis of \qf.}
Let $G$ be partitioned into $x$ disjoint connected components $C_1,C_2,\ldots,C_x$, where each component $C_i=(U_i,L_i,E_i)$ contains a subset of vertices and edges.
After the affected component $C_i$ is identified, the query cost of \qf is bounded by $O(|E_i|)$, since peeling is performed only within $C_i$.
In the worst case, if the graph contains a single connected component, the complexity becomes $O(m)$.
However, in practice, many updates affect only local components, and the component-aware design significantly reduces unnecessary computation.

\section{Experimental Evaluation}
\label{sct:experiment}

\subsection{Experimental setting}

We compare our methods with the following designs for \abcore computation and on-the-fly query processing.

\noindent
\textbf{CPU-based algorithms.}
The CPU-based baselines include:
1)~\online~\cite{ding2017efficient}: the classic peeling algorithm that iteratively removes vertices violating the $(\alpha,\beta)$ degree constraints.
2)~\boge~\cite{liu2019, liu2020efficient}: an index-based method that precomputes and stores \abcore indexes for fast query answering.
3)~\fang~\cite{luo2023efficient}: a compact index-based method that maintains bicore numbers for efficient arbitrary \abcore queries.
4)~\parindex~\cite{huang2023efficient}: a parallel index-based method that adopts the same three-level index structure as \boge.

Both \boge and \fang provide maintenance methods for dynamic bipartite graphs, whereas \parindex focuses on parallel index construction and does not propose a maintenance algorithm.
However, \fang runs out of memory on all datasets when processing our dynamic workloads.
Therefore, in the on-the-fly query evaluation, we report only \boge as the representative index-maintenance baseline for comparison with our proposed \qf algorithm.

\noindent
\textbf{GPU-based algorithms.}
We evaluate the following GPU-based algorithms:
5)~\baseline: our GPU-based parallel peeling algorithm for \abcore computation, as described in Algorithm~\ref{algorithm:baseline}.
6)~\adv: an optimized version of \baseline that incorporates core-number-based pruning to reduce redundant computation, as described in Algorithm~\ref{algorithm:adv_baseline}.
7)~\qf: our GPU-based on-the-fly query algorithm that restricts computation to the connected component containing the updated edge.

\begin{table}[t]
\caption{Summary of Datasets.}
\vspace*{-2mm}
\begin{center}
    \begin{tabular}{c|c|c|c|c}
        \noalign{\hrule height 1pt}
        \textbf{Dataset} & $|U(G)|$ & $|L(G)|$ & $|E(G)|$ & $\delta$ \\\hline
        RE  & 781,265   & 283,911    & 60,569,726   & 191 \\ \hline
        NY  & 299,752   & 101,636    & 69,679,427   & 314 \\ \hline
        TR  & 551,787   & 1,173,225  & 83,629,405   & 509 \\ \hline
        Dui & 833,081   & 33,778,221 & 101,798,957  & 184 \\ \hline
        LG  & 3,201,203 & 7,489,073  & 112,307,385  & 109 \\ \hline
        Dti & 4,511,972 & 33,777,768 & 137,240,382  & 180 \\ \hline
        WT  & 27,665,730& 12,756,244 & 140,613,762  & 438 \\ \hline
        YS  & 1,000,990 & 624,961    & 256,804,235  & 1,100 \\ \hline
        OG  & 2,783,196 & 8,730,857  & 327,037,487  & 467 \\ \hline
        PM  & 8,200,000 & 141,043    & 483,450,157  & 108 \\ \hline
        PL  & 220,576 & 5,000,000  & 1,000,000,000 & 215 \\ \hline
        \noalign{\hrule height 1pt}
    \end{tabular}
\end{center}
\label{table:datasets}
\vspace{-3.5em}
\end{table}

\noindent
\textbf{Datasets.}
We evaluate all algorithms on 11 bipartite graphs, including 10 real-world datasets from KONECT (\url{http://konect.cc/}) and one synthetic dataset.
The synthetic dataset \texttt{PL} follows a power-law distribution, which is widely used to mimic structural properties of real-world networks~\cite{liu2019,luo2023efficient}.
Table~\ref{table:datasets} summarizes the dataset statistics, where $|U(G)|$ and $|L(G)|$ denote the numbers of vertices in the upper and lower layers, respectively, and $|E(G)|$ denotes the number of edges.
The parameter $\delta$ is the largest value such that the $(\delta,\delta)$-core of $G$ is non-empty.

\noindent
\textbf{Experimental environment.}
All GPU-based algorithms are implemented in CUDA, while all CPU-based algorithms are implemented in C++.
By default, GPU-based algorithms are evaluated on a Linux server equipped with an NVIDIA RTX 3090 Ti GPU with 24 GB of global memory.
For comparison, CPU-based algorithms are executed on a Linux server with an Intel Xeon Silver 4314 processor and 512 GB of main memory.
An algorithm is terminated if its execution time exceeds 10 hours, and dynamic update algorithms are terminated if their running time exceeds 3600 seconds.
The abbreviation \texttt{OOM} denotes out-of-memory errors.

\subsection{\abcore Computation Performance}

\begin{table*}[ht]
\centering
\small
\caption{Query efficiency evaluation: preprocessing time and query time (in seconds).}
\vspace{-2mm}
\label{tab:efficiency_evaluation}
\resizebox{0.98\textwidth}{!}{
\setlength{\tabcolsep}{5pt}
    \begin{tabular}{l *{6}{c} *{6}{c}}
        \toprule
        \multirow{2}{*}{\textbf{Dataset}} & 
        \multicolumn{6}{c}{\textbf{Preprocessing Time (sec)}} & 
        \multicolumn{6}{c}{\textbf{Query Time (sec)}} \\
        \cmidrule(lr){2-7} \cmidrule(lr){8-13}
        & \boge & \fang & \parindex & \online & \baseline & \adv 
        & \boge & \fang & \parindex & \online & \baseline & \adv \\
        \midrule
        \textit{RE}  & 53.84 & 21.05 & 11.27 & 0.00 & 0.00 & 0.19 & 0.0010 & 0.0233 & 0.0015 & 0.2865 & 0.0100 & 0.0077 \\
        \textit{NY}  & 61.33 & 28.39 & 12.95 & 0.00 & 0.00 & 0.14 & 0.0012 & 0.0108 & 0.0020 & 0.1241 & 0.0040 & 0.0029 \\
        \textit{TR}  & 101.77 & 54.16 & 24.68 & 0.00 & 0.00 & 0.28 & 0.0005 & 0.0176 & 0.0004 & 0.3449 & 0.0072 & 0.0053 \\
        \textit{Dui} & 422.26 & 90.64 & 110.13 & 0.00 & 0.00 & 0.11 & 0.0003 & 0.0918 & 0.0005 & 1.6082 & 0.0467 & 0.0280 \\
        \textit{LG}  & 186.17 & 41.89 & 41.17 & 0.00 & 0.00 & 0.09 & 0.0015 & 0.0764 & 0.0021 & 1.4509 & 0.0334 & 0.0222 \\
        \textit{Dti} & 566.51 & 130.95 & 121.59 & 0.00 & 0.00 & 0.32 & 0.0001 & 0.0881 & 0.0002 & 2.1534 & 0.0678 & 0.0344 \\
        \textit{WT}  & 667.48 & 240.45 & 214.28 & 0.00 & 0.00 & 1.03 & 0.0001 & 0.0914 & 0.0001 & 1.6731 & 0.0946 & 0.0347 \\
        \textit{YS}  & 500.80 & 347.84 & \texttt{OOM} & 0.00 & 0.00 & 0.86 & 0.0006 & 0.0194 & \texttt{OOM} & 1.3236 & 0.0284 & 0.0141 \\
        \textit{OG}  & 621.56 & 313.56 & 197.84 & 0.00 & 0.00 & 0.53 & 0.0010 & 0.0605 & 0.0015 & 2.8658 & 0.0538 & 0.0387 \\
        \textit{PM}  & 488.88 & 175.87 & \texttt{OOM} & 0.00 & 0.00 & 0.38 & 0.0206 & 0.1306 & \texttt{OOM} & 3.3818 & 0.0520 & 0.0481 \\
        \textit{PL}  & 1525.37 & 662.90 & 158.55 & 0.00 & 0.00 & 1.00 & 0.0204 & 0.8360 & 0.0219 & 0.0229 & 0.0006 & 0.0003 \\
        \bottomrule
    \end{tabular}
}
\vspace{-1.5em}
\end{table*}

\noindent
\textbf{Preprocessing cost and query efficiency.}
Table~\ref{tab:efficiency_evaluation} reports the preprocessing cost and query time of all evaluated methods across all datasets.
For fair comparison, \boge, \parindex, and \fang build their indexes using 32 CPU threads.

As shown in Table~\ref{tab:efficiency_evaluation}, \adv incurs much lower preprocessing cost than the index-based methods \boge, \parindex, and \fang on all datasets.
For example, on the largest dataset \texttt{PL}, \adv takes only 1.00 second for preprocessing, while \boge, \parindex, and \fang require 1525.37, 158.55, and 662.90 seconds, respectively.
On \texttt{Dti} and \texttt{YS}, \adv takes only 0.32 and 0.86 seconds, respectively, whereas \boge requires 566.51 and 500.80 seconds, and \fang requires 130.95 and 347.84 seconds.
Moreover, \parindex runs out of memory on \texttt{YS} and \texttt{PM}, requiring more than 512 GB memory during preprocessing.
This is because \adv performs only lightweight preprocessing by computing the core number of each vertex $u \in V(G)$, instead of maintaining global index structures.

Table~\ref{tab:efficiency_evaluation} also reports the query time under the default setting, where $\alpha = 0.4 \times \delta$ and $\beta = 0.6 \times \alpha_{\text{offset}}$, and $\alpha_{\text{offset}}$ denotes the largest $\beta$ value such that the $(\alpha,\beta)$-core is non-empty.
Both \baseline{} and \adv{} achieve millisecond-level query latency with little preprocessing overhead.
Compared with \online, \adv achieves a 34$\times$--94$\times$ speedup, while \baseline achieves a 18$\times$--65$\times$ speedup.
For example, on \texttt{YS}, \online takes 1.3236 seconds, whereas \baseline and \adv take only 0.0284 and 0.0141 seconds, respectively.
The core-number-based pruning of \adv further improves over \baseline; on \texttt{WT}, \adv reduces the query time from 0.0946 seconds to 0.0347 seconds.
Compared with CPU-based methods, \adv consistently outperforms \fang and \online in query time, and also outperforms \boge and \parindex on several large datasets.
For example, on the largest dataset \texttt{PL}, \adv answers a query in 0.0003 seconds, achieving 68$\times$ and 2787$\times$ speedups over \boge and \fang, respectively.
Although \boge and \parindex can answer some queries faster on several datasets, they require substantial preprocessing cost.
Overall, \adv provides a favorable trade-off by avoiding expensive index construction while still supporting low-latency \abcore computation.

\begin{figure}[t]
    \centering
    \includegraphics[width=\linewidth]{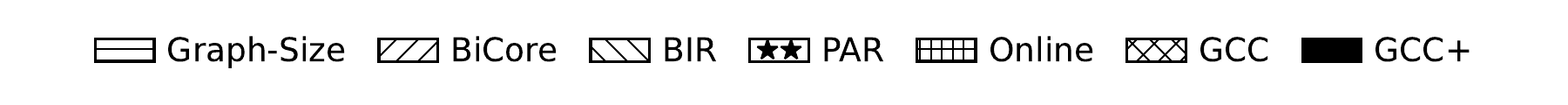}

    \includegraphics[width=0.99\linewidth]{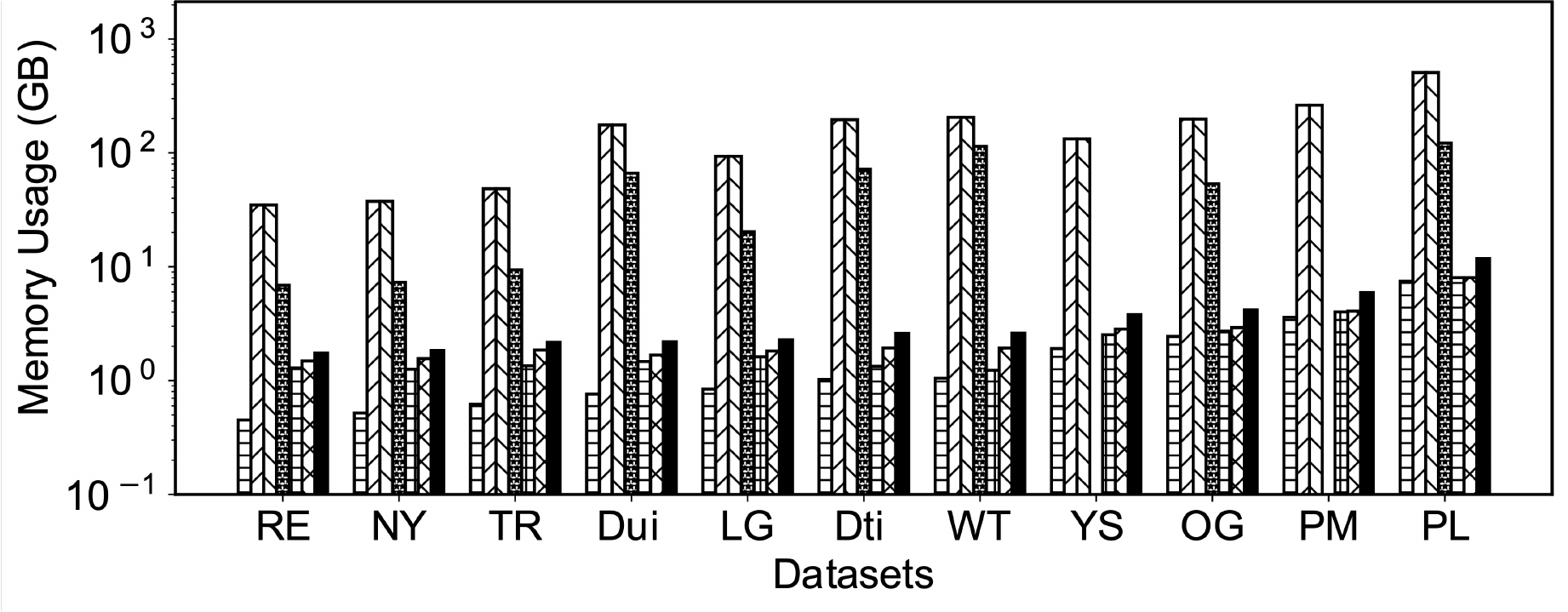}
    \vspace{-2em}
    \caption{Memory usage evaluation on all datasets.}
    \label{fig:mem}
    \vspace{-1.5em}
\end{figure}

\noindent
\textbf{Memory usage during preprocessing.}
Figure~\ref{fig:mem} reports the peak runtime memory consumption during preprocessing.
For \boge, \parindex, and \fang, the reported memory usage corresponds to the peak memory observed during 32-thread index construction.
Although \parindex reduces the memory consumption of parallel index construction compared with \boge, all index-based methods still require substantial intermediate memory.
In particular, the final indexes of \boge and \fang are linear in the number of edges, i.e., $O(m)$, but their construction processes consume much higher peak memory.
As shown in Figure~\ref{fig:mem}, the CPU-based index methods \boge, \parindex, and \fang incur much higher memory overhead than our GPU-based \adv.
For example, on the largest dataset \texttt{PL}, \adv peaks at only 11.7 GB, whereas \boge and \fang each require over 500 GB, and \parindex consumes more than 120 GB.
In addition, since \fang follows the same index construction mechanism as \boge, their memory profiles are nearly identical.
This substantial gap in memory consumption renders index-based methods impractical for large-scale bipartite graphs.
In contrast, \adv keeps its peak memory usage below 15 GB on all datasets, making it suitable for deployment on memory-limited GPUs.
This is because \adv avoids constructing global index structures and only maintains lightweight auxiliary information for online computation.

\begin{figure}[t]
\centering
\includegraphics[width=0.8\linewidth]{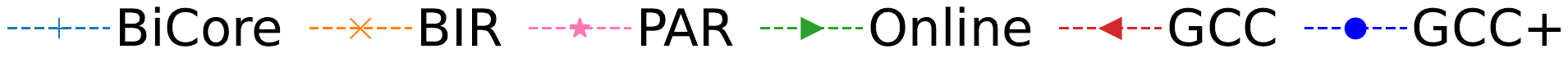}

\subfigure[\texttt{OG}]{
    \includegraphics[width=0.32\linewidth]{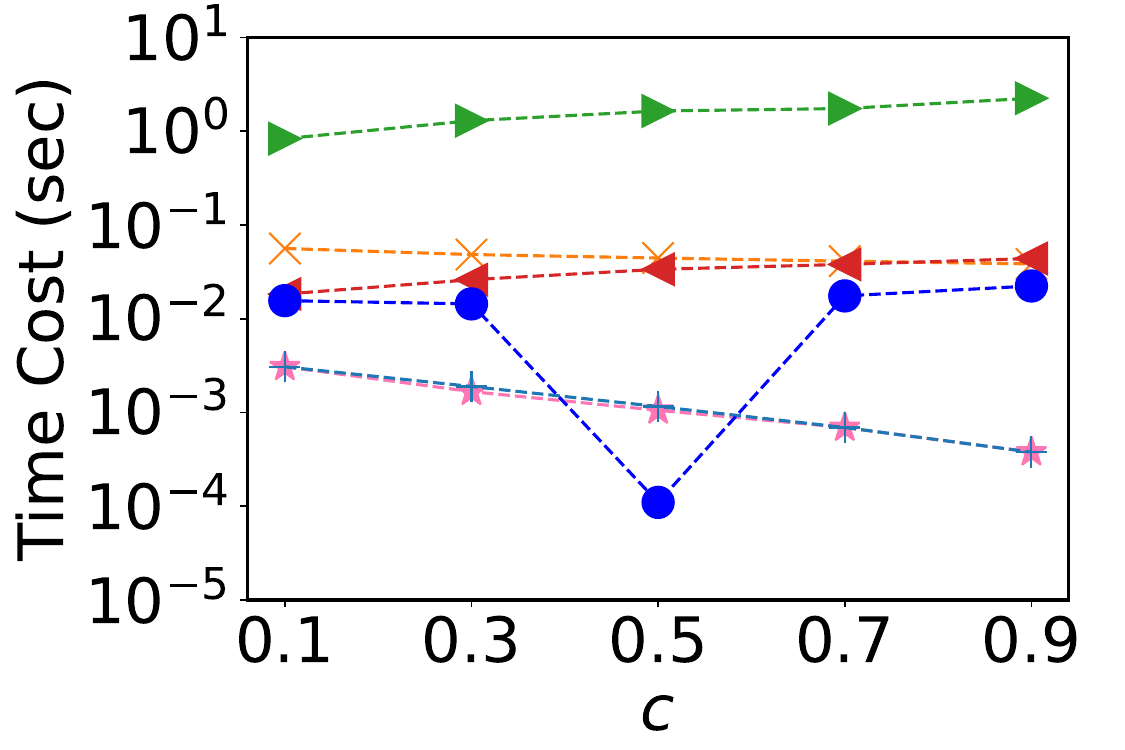}
    \label{fig:p111}
}
\hspace{-14pt}
\subfigure[\texttt{PM}]{
    \includegraphics[width=0.32\linewidth]{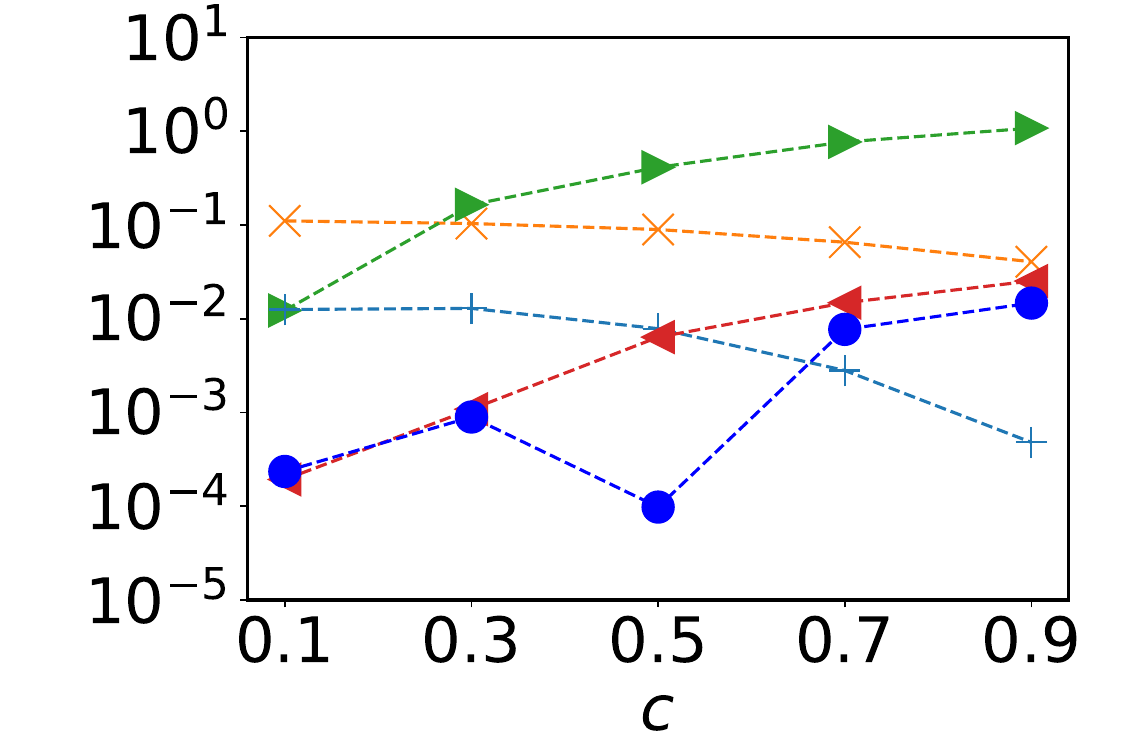}
    \label{fig:p222}
}
\hspace{-14pt}
\subfigure[\texttt{PL}]{
    \includegraphics[width=0.32\linewidth]{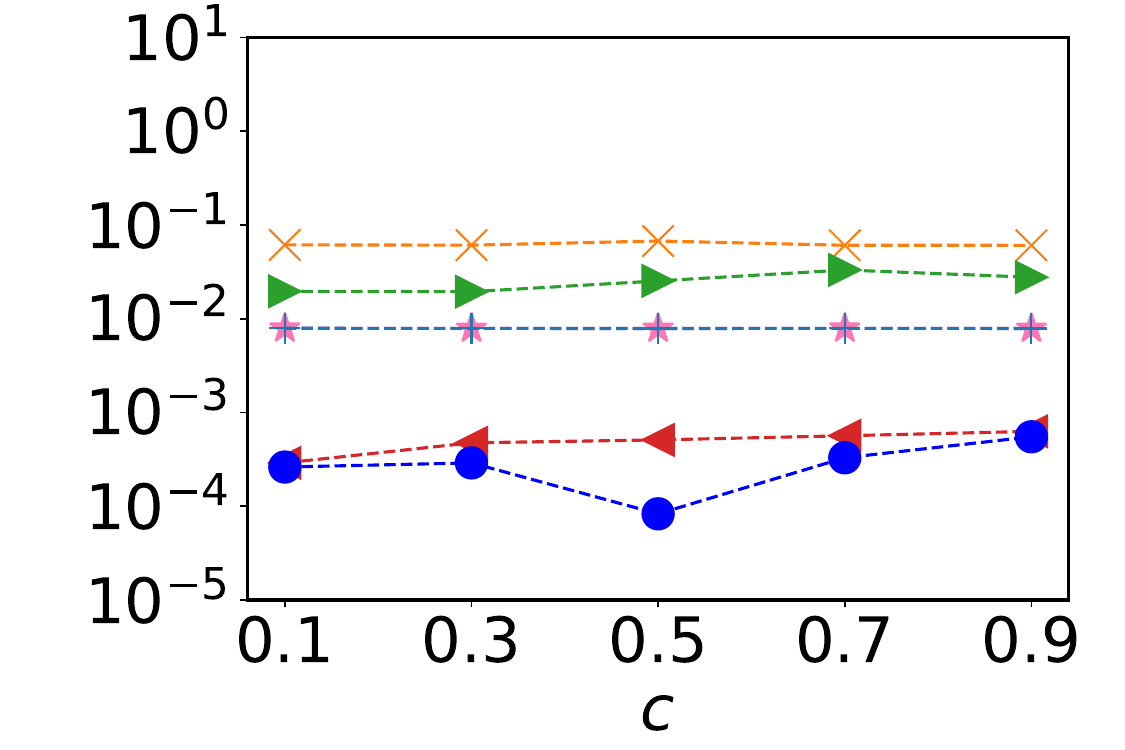}
    \label{fig:p444}
}

\vspace{-3mm}
\caption{\color{black}Performance on retrieving \abcores varying $\alpha$.}
\label{fig:varying_a}
\vspace{-3mm}
\end{figure}

\begin{figure}[t]
\centering
\includegraphics[width=0.8\linewidth]{expfig_new/legend.pdf}

\subfigure[\texttt{OG}]{
    \includegraphics[width=0.32\linewidth]{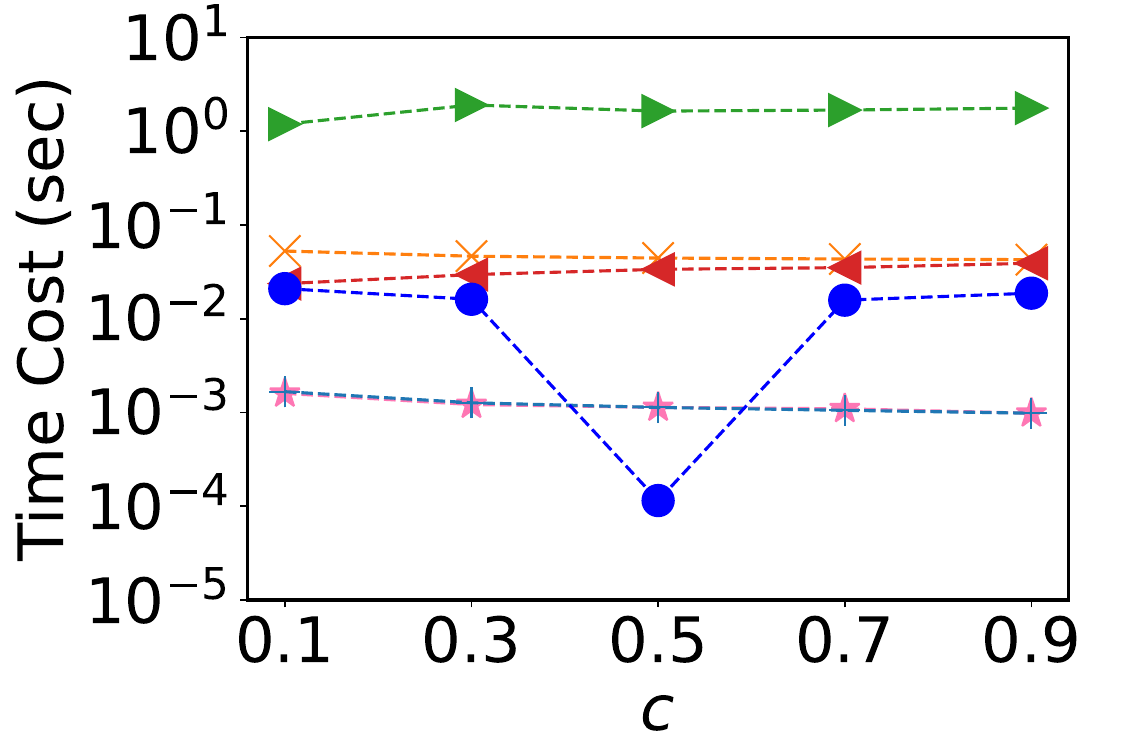}
    \label{fig:p11}
}
\hspace{-14pt}
\subfigure[\texttt{PM}]{
    \includegraphics[width=0.32\linewidth]{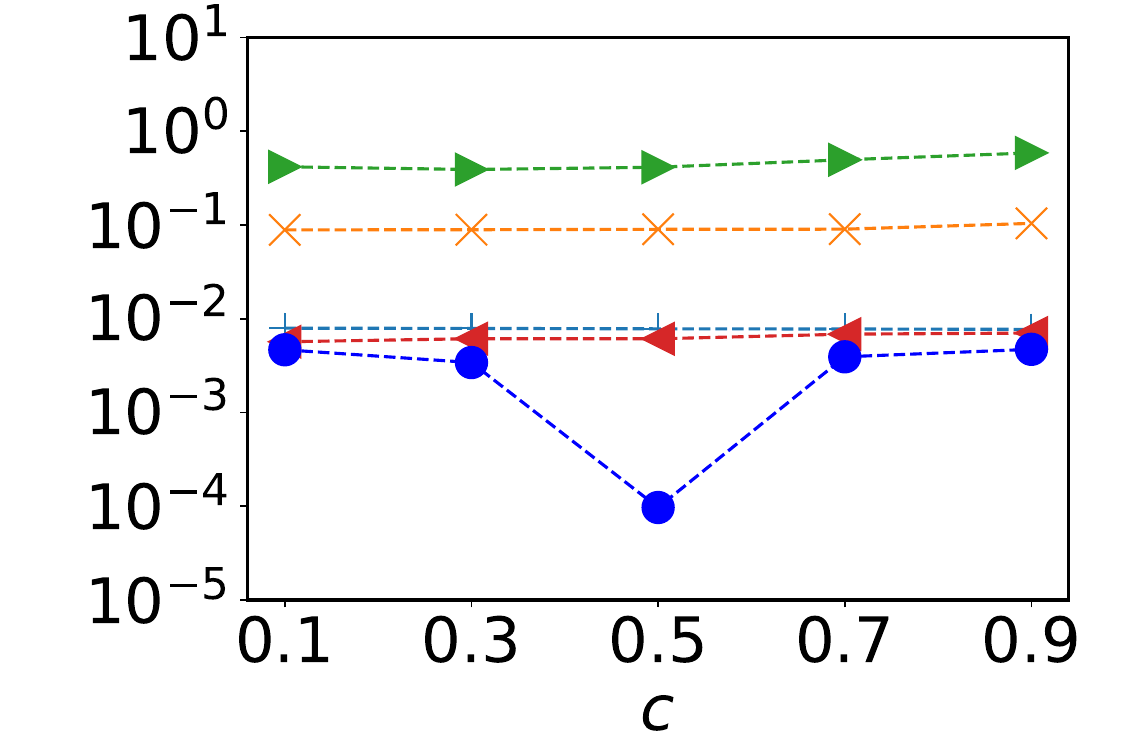}
    \label{fig:p22}
}
\hspace{-14pt}
\subfigure[\texttt{PL}]{
    \includegraphics[width=0.32\linewidth]{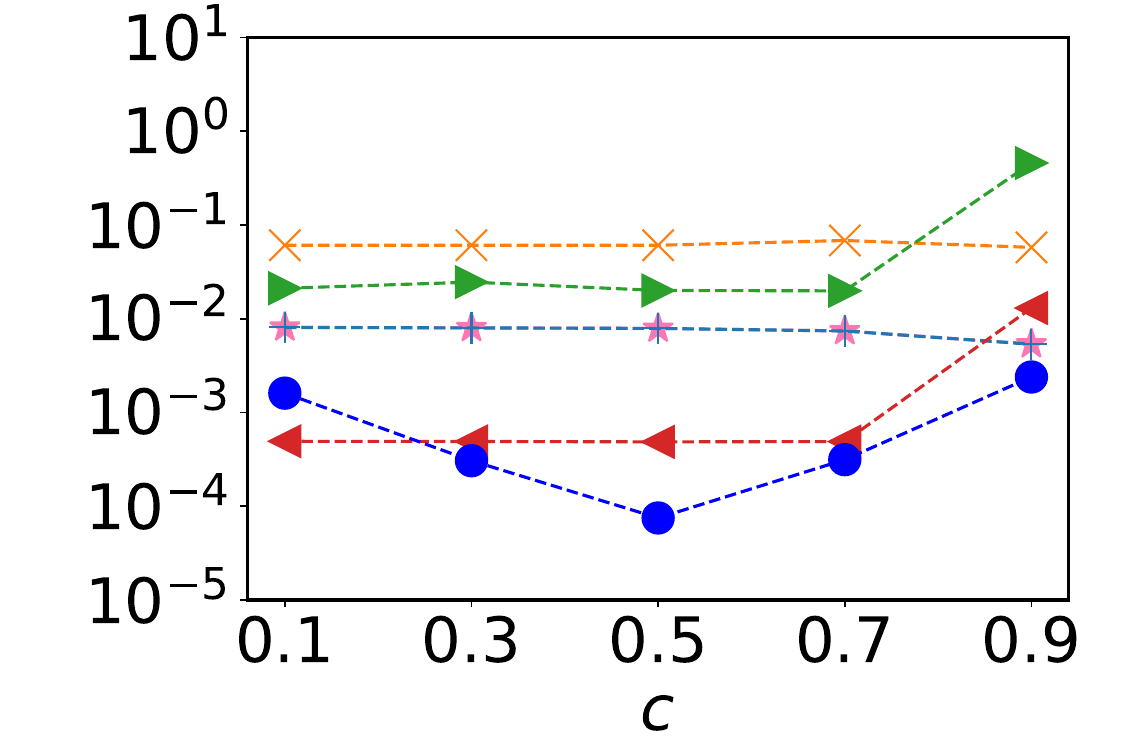}
    \label{fig:p44}
}

\vspace{-3mm}
\caption{\color{black}Performance on retrieving \abcores varying $\beta$.}
\label{fig:varying_b}
\vspace{-1.5em}
\end{figure}

\noindent
\textbf{Query sensitivity to $\alpha$ and $\beta$.}
We further evaluate the query performance under different parameter settings by fixing one parameter and varying the other.
In Figure~\ref{fig:varying_a}, we fix $\beta = 0.5 \times \delta$ and vary $\alpha = c \times \delta$, where $c$ ranges from 0.1 to 0.9.
In Figure~\ref{fig:varying_b}, we fix $\alpha = 0.5 \times \delta$ and vary $\beta = c \times \delta$.
On large datasets, \adv outperforms the index-based methods \boge and \parindex under several parameter settings.
Since \parindex adopts the same index structure as \boge, their query performance is close across different parameter settings.
For example, on the largest dataset \texttt{PL}, when $\alpha = 0.5 \times \delta$ and $\beta = 0.3 \times \delta$, \adv achieves 26$\times$ and 198$\times$ speedups over \boge and \fang, respectively, as shown in Figure~\ref{fig:p44}.
Notably, when $\alpha = \beta = 0.5 \times \delta$, \adv achieves the best performance among all methods because it only requires a simple parallel scan.
These results show that the lightweight preprocessing strategy enables \adv to efficiently handle large-scale bipartite graphs across different $(\alpha,\beta)$ settings.

\subsection{On-the-fly \abcore Query Performance}

\begin{figure}[t]
\centering
\includegraphics[trim=0 0 0 0,width=0.48\textwidth]{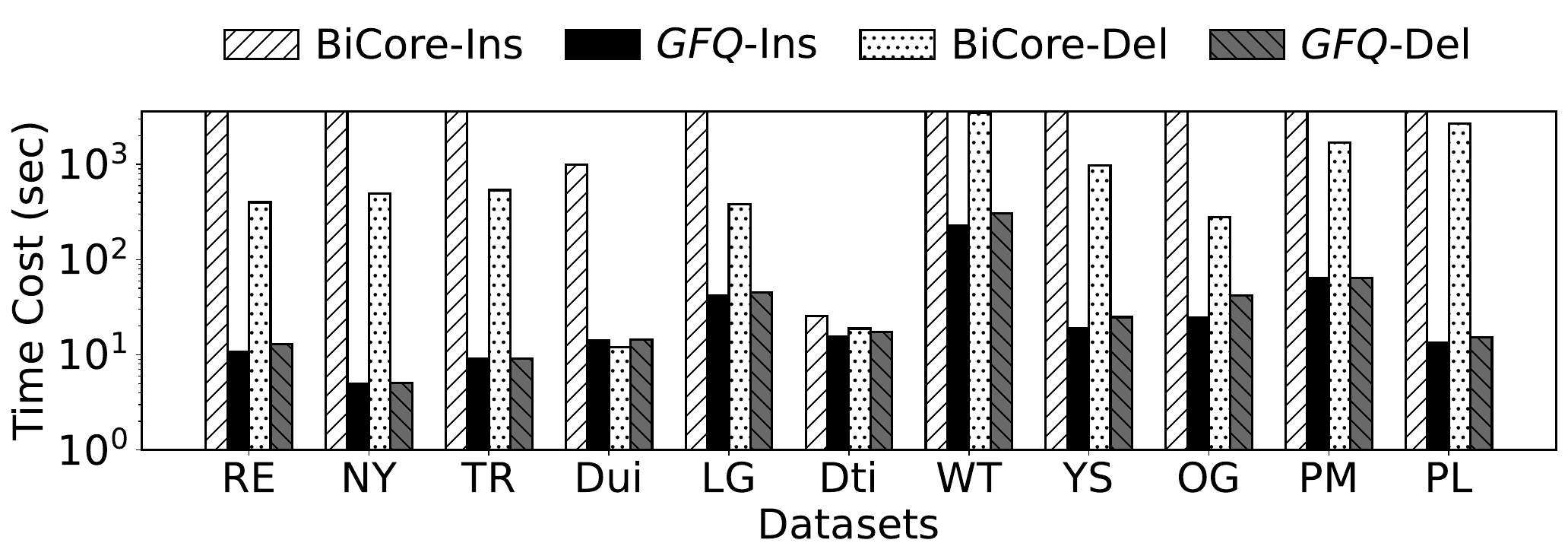}
\vspace*{-2mm}
\caption{Performance on on-the-fly query processing.}
\label{fig:qfly}
\vspace{-1.5em}
\end{figure}

\noindent
\textbf{Query efficiency.}
To evaluate the efficiency of on-the-fly queries under dynamic updates, we randomly select 1,000 edges for insertion and deletion on each dataset.
For each updated edge $e=(u,v)$, the values of $\alpha$ and $\beta$ are randomly chosen such that both $u$ and $v$ belong to the corresponding \abcore.
We set a timeout threshold of 3,600 seconds for the total running time of 1,000 edge insertions or deletions.

As shown in Figure~\ref{fig:qfly}, \qf consistently achieves fast response times for both edge insertions and deletions across all datasets.
In contrast, \boge fails to complete edge insertion within the time limit on most datasets.
For insertion queries, \boge only finishes on \texttt{Dti} and \texttt{Dui}, requiring 25.6 and 994.1 seconds, respectively, while timing out on all other datasets.
By contrast, \qf completes all insertion queries, with running times ranging from 4.94 seconds on \texttt{NY} to 227.9 seconds on \texttt{WT}.
For deletion queries, \qf also consistently achieves lower latency than \boge.
For example, on \texttt{WT}, \qf completes in 305.9 seconds, whereas \boge takes 3,494.5 seconds.
The performance gap mainly comes from the high cost of index maintenance.
In the worst case, the maintenance procedure of \boge has the same $O(\delta \cdot m)$ complexity as rebuilding the entire index~\cite{liu2020efficient}, which is expensive for large and frequently updated graphs.
In contrast, \qf confines the computation to the affected connected components, avoiding global recomputation and enabling efficient on-the-fly query processing under dynamic workloads.

\begin{figure}[t]
    \centering
    \small
    \includegraphics[width=0.48\linewidth]{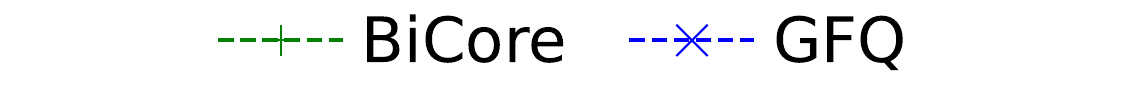}
    \vspace{-1mm}
    
    \subfigure[\texttt{WT} (insertion)]{
        \includegraphics[width=0.34\linewidth]{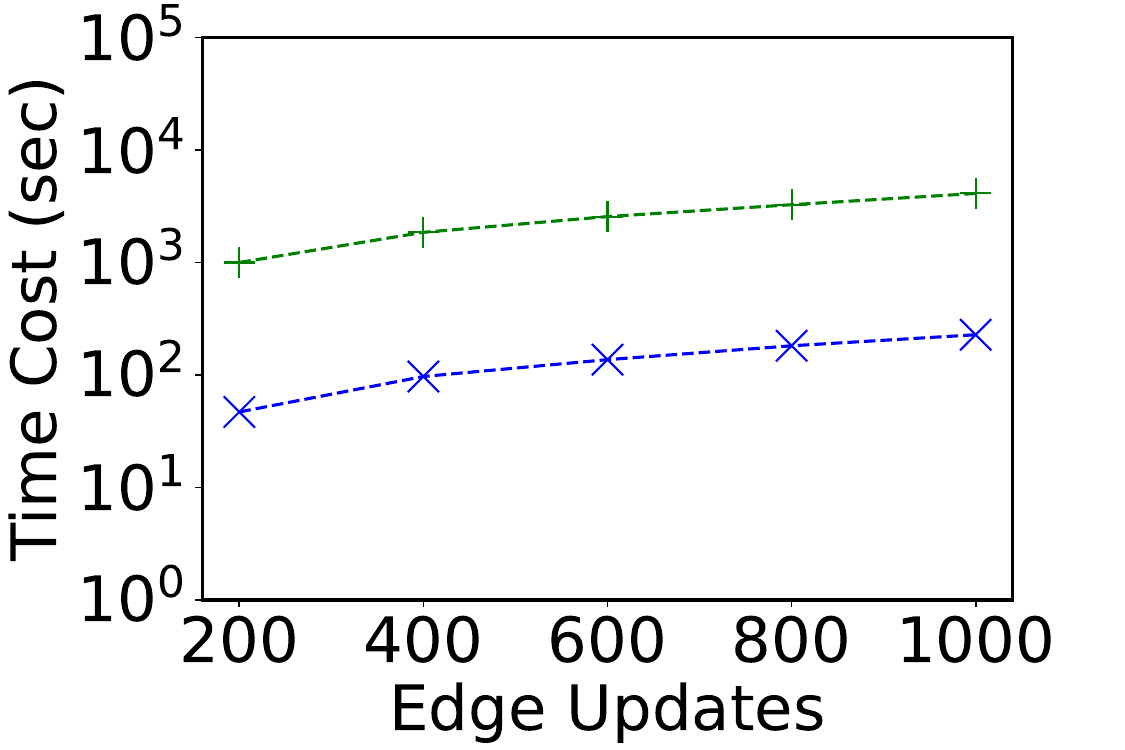}
        \label{fig:v1}
    }
    \hspace{-18pt}
    \subfigure[\texttt{YS} (insertion)]{
        \includegraphics[width=0.34\linewidth]{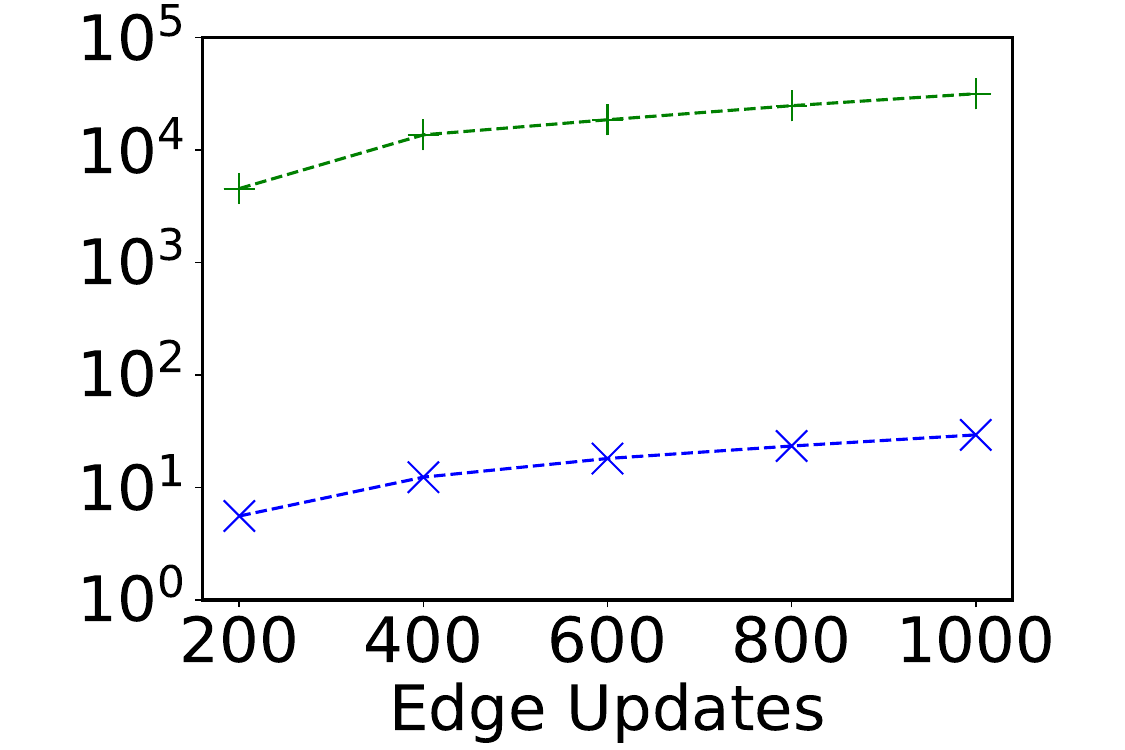}
        \label{fig:v3}
    }
    \hspace{-18pt}
    \subfigure[\texttt{LG} (insertion)]{
        \includegraphics[width=0.34\linewidth]{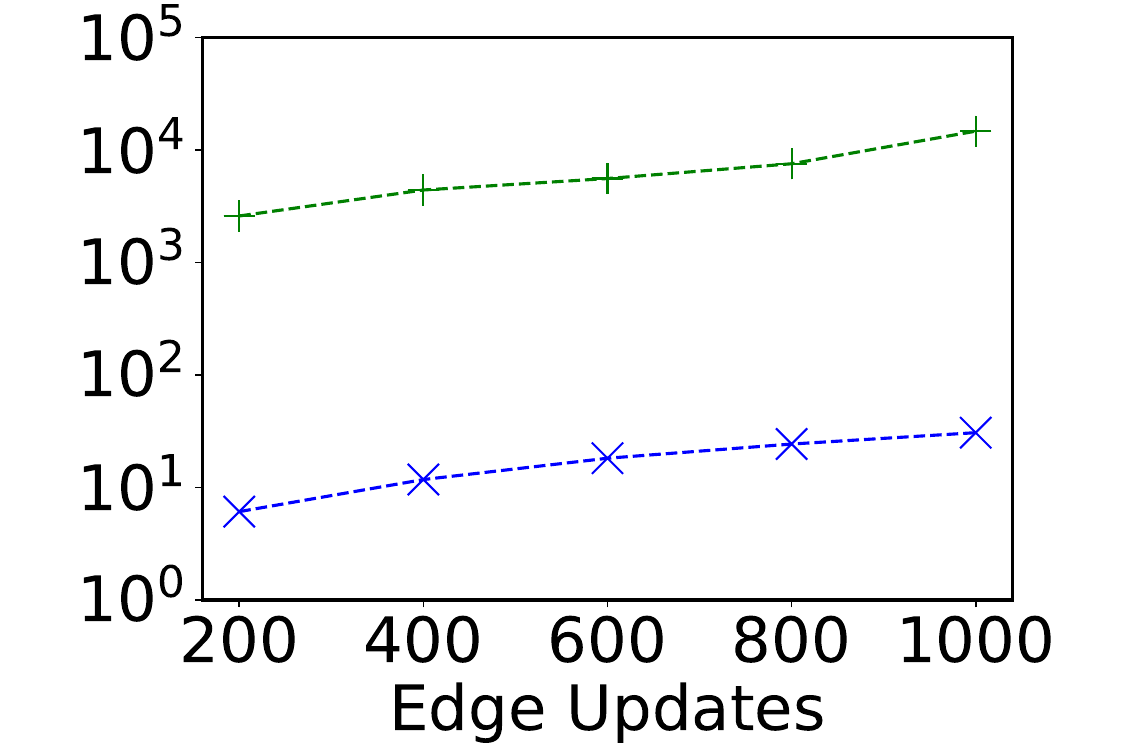}
        \label{fig:v5}
    }

    \vspace{-2mm}

    \subfigure[\texttt{WT} (deletion)]{
        \includegraphics[width=0.33\linewidth]{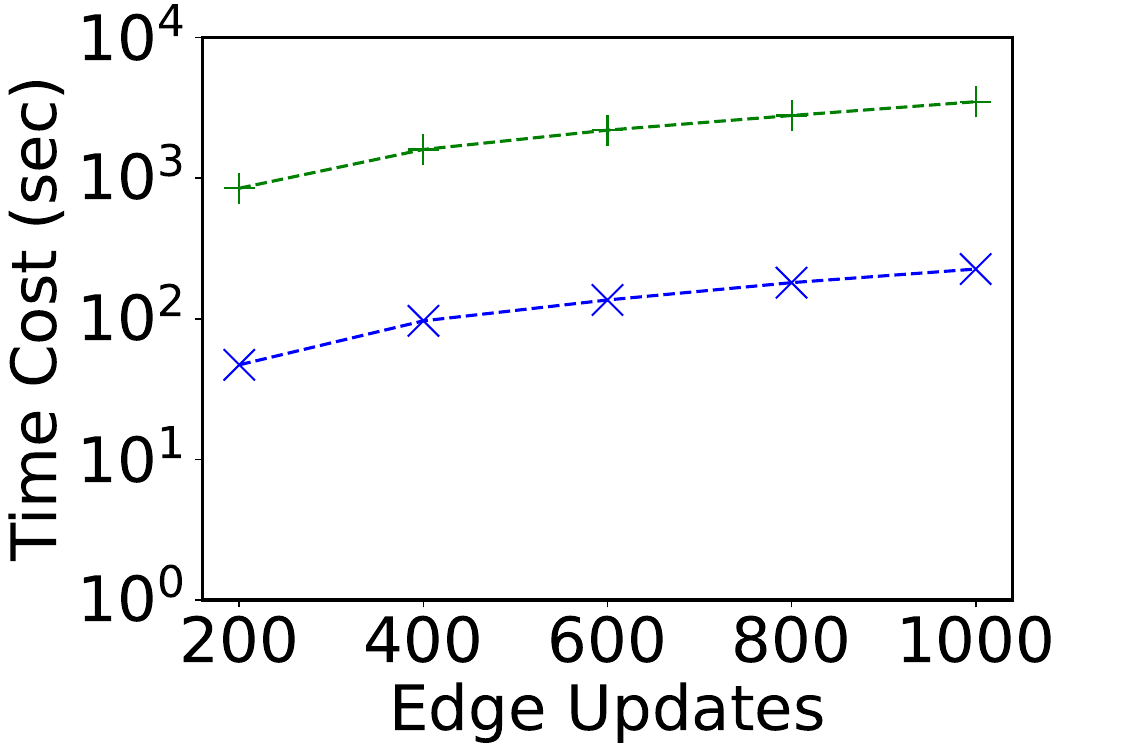}
        \label{fig:v2}
    }
    \hspace{-14pt}
    \subfigure[\texttt{YS} (deletion)]{
        \includegraphics[width=0.33\linewidth]{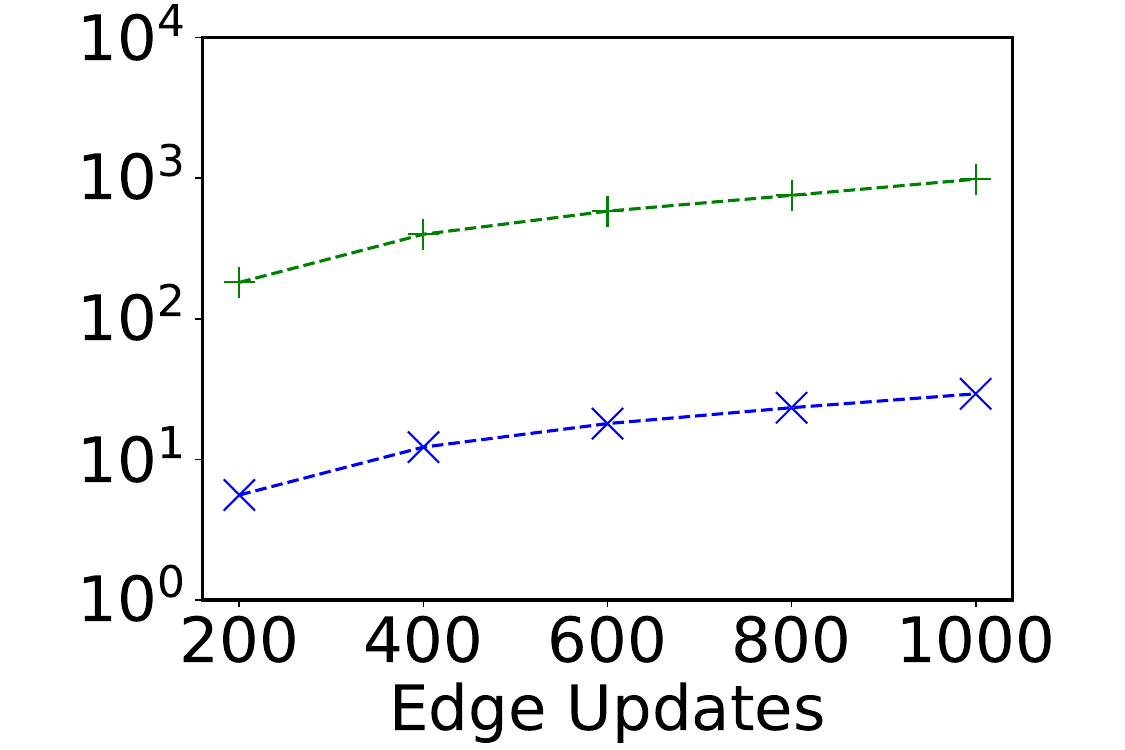}
        \label{fig:v4}
    }
    \hspace{-14pt}
    \subfigure[\texttt{LG} (deletion)]{
        \includegraphics[width=0.33\linewidth]{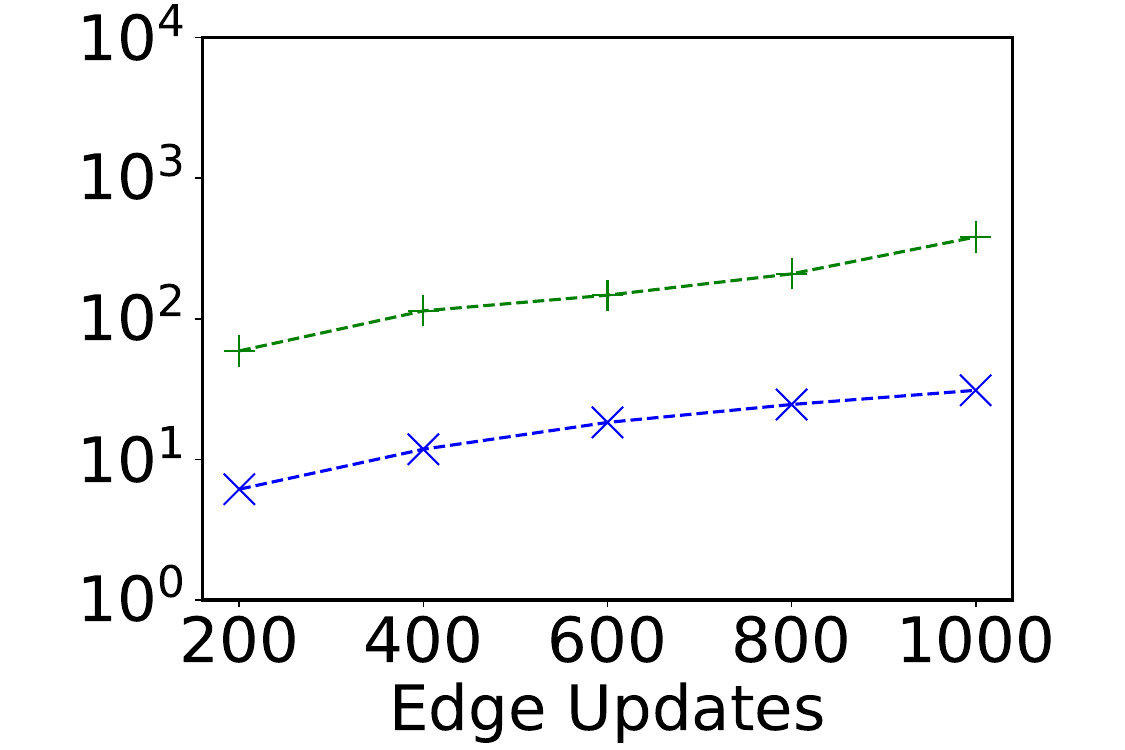}
        \label{fig:v6}
    }
    \caption{The scalability of \qf.}
    \label{fig:scal}
    \vspace{-2em}
\end{figure}

\noindent
\textbf{Scalability performance.}
Figure~\ref{fig:scal} evaluates the scalability of \qf by varying the number of dynamic updates from 200 to 1,000 on \texttt{WT}, \texttt{YS}, and \texttt{LG}.
For each updated edge $e=(u,v)$, the values of $\alpha$ and $\beta$ are randomly chosen such that both $u$ and $v$ belong to the corresponding \abcore.
As shown in Figure~\ref{fig:scal}, \qf consistently outperforms the CPU-based method \boge for both edge insertions and deletions across all tested datasets.
On \texttt{WT}, Figures~\ref{fig:v1} and~\ref{fig:v2} show that the running time of \qf increases smoothly as the number of updates grows; for 1,000 insertions, \qf takes 227.9 seconds, whereas \boge takes 4109.37 seconds.
On \texttt{YS}, the performance gap is even larger: for 1,000 insertions, \qf takes only 29.35 seconds, while \boge requires 31640.3 seconds.
For deletions, \qf keeps the running time below 30 seconds on \texttt{YS} across all update sizes, whereas \boge takes close to 1,000 seconds for 1,000 deletions.
The advantage of \qf comes from its connectivity-aware design, which confines computation to affected connected components and avoids costly global index maintenance.
Overall, these results demonstrate that \qf scales efficiently with the number of dynamic updates and is suitable for on-the-fly query processing on large bipartite graphs.

\section{Conclusion}
\label{sct:conclusion}

In this paper, we propose GPU-based algorithms to efficiently support \abcore computation and on-the-fly query processing on large-scale bipartite graphs. 
We first introduce \baseline, a warp-centric peeling algorithm that eliminates the need for costly index construction. 
To further reduce unnecessary computation, we propose \adv, which performs a lightweight preprocessing based on the core numbers of vertices to prune vertices that are guaranteed to satisfy or violate the degree constraints. 
For dynamic scenarios, we design \qf, a connectivity-aware on-the-fly query algorithm that confines updates and computation to local components. 
Extensive experiments on 11 datasets demonstrate that our algorithms significantly outperform existing solutions.

\bibliographystyle{IEEEtran}
\bibliography{bib/paper}

\end{document}